\documentclass[a4paper,11pt]{article}

\usepackage{amsmath}
\usepackage{amssymb}
\usepackage{amsthm}
\usepackage{pstcol}
\usepackage{graphicx}
\usepackage{float}
\usepackage{url}
\usepackage{epsf}
\urlstyle{sf}

\newcommand\N{\mathbb N}

\newcommand\Q{\mathbb Q}
\newcommand\R{\mathbb R}

\def\cT{{\mathcal T}}
\def\cR{{\mathcal R}}
\def\cF{{\mathcal F}}
\def\cP{{\mathcal P}}

\def\cC{{\mathcal C}}

\def\cN{{\mathcal N}}

\def\eref#1{(\ref{#1})}
\def\edge{ -\!\!\!- }

\usepackage[metapost, truebbox]{mfpic}
\opengraphsfile{petri}

\newtheorem{mydef}{Definition}[section]{\bfseries}{\itshape}
\newtheorem{mythm}{Theorem}[section]{\bfseries}{\itshape}
\newtheorem{myprop}{Proposition}[section]{\bfseries}{\itshape}
\newtheorem{mylem}{Lemma}[section]{\bfseries}{\itshape}
\newtheorem{mycor}{Corollary}[section]{\bfseries}{\itshape}
\newtheorem{myeg}{Example}

\pagestyle{plain}

\setcounter{section}{0}

\title{Deficiency Zero Petri Nets and Product Form}
\author{Jean Mairesse, Hoang-Thach Nguyen\\
LIAFA, Universit\'e Paris Diderot - Paris 7,\\
75205 Paris Cedex 13\\ 
mairesse{@}liafa.jussieu.fr, ngthach{@}liafa.jussieu.fr}
\date{}
\begin{document}
\maketitle

\begin{abstract}
Consider a Markovian Petri net with race policy. The marking process
has a ``product form'' stationary
distribution if the probability of viewing a given marking can be decomposed as the product over places of terms
depending only on the local marking. First we observe that the
Deficiency Zero Theorem of Feinberg, developed for chemical reaction
networks, provides a structural and simple sufficient condition for the
existence of a product form. 
In view of this, we study the classical subclass of
free-choice nets. Roughly, we show that the only Petri nets of this class which
have
a product form are the state machines, which can
alternatively be viewed as Jackson networks. 
\end{abstract}

\section*{Introduction}\label{intro}

Queueing networks, Petri nets, and chemical reaction networks, are
three mathematical models of ``networks'', each of them with
an identified community of researchers. 

\medskip

In queueing theory, the existence of ``product form'' Markovian networks is
one of the cornerstones and jewels of the theory. 
Monographies are dedicated to the subsect, e.g. Kelly~\cite{kell79} or Van
Dijk~\cite{vandijk93}. Roughly, the
interest lies in the equilibrium behavior of a Markovian queueing
network. The existence of such an equilibrium is equivalent to the
existence of a stationary distribution $\pi$ for the queue-length
process. In some remarkable cases, $\pi$ not only exists
but has an explicit decomposable shape called ``product form''. 
The interest is two-fold. First, from a quantitative point of view, it makes the explicit computation of 
$\pi$ possible, even for large systems. Second, the product form has
important qualitative implications, like the ``Poisson-Input
Poisson-Output'' Theorems. Consequently, important and lasting efforts
have been devoted to the quest for product form queueing networks. 

\medskip

It is attractive and natural to try to develop an analog theory for
Markovian Petri nets, with the marking process replacing the
queue-length process. There is indeed a continuous string of research
on this topic since the nineties,
e.g. \cite{BoSe,CHTa96,DonSe92,FlNa91,HMSS,HLTa,lazar91}. Tools
have been developped in the process which build on classical objects of Petri
net theory (e.g. closed support T-invariants). The most
accomplished results are the ones in \cite{HMSS}. 

\medskip

In chemistry and biology, models of
  chemical reaction networks have also emerged. Such a network is
  specified by a finite set 
  of reactions between species of the type ``$2A+B \rightarrow C$'', meaning
  that two molecules of $A$ can interact with one molecule of $B$ to
  create one molecule of $C$. 
The dynamics of such models is either deterministic or stochastic, see
\cite{kurtz72}. 

Deterministic models are the most studied ones; they correspond to 
coupled sets of ordinary differential equations. 
The most significant result is arguably the Deficiency
Zero Theorem of Feinberg~\cite{feinberg79}. Deficiency Zero is a structural property
which can be very easily checked knowing the shape of the reactions in
a chemical network. It does not refer to any assumption on the
associated dynamics. Feinberg Theorem states that if a network
satisfies the Deficiency Zero condition, then the associated
deterministic dynamic model has remarkable stability properties. An
intermediate result is to prove that a set of non-linear equations
(NLE) have a strictly positive solution. 

Stochastic models of chemical reaction networks
correspond to continuous-time Markov processes of a specific shape. 
Such models were considered in Chapter 8 of the seminal book by
Kelly~\cite{kell79}. There it is proved that if a set of non-linear
``traffic equations'' (NLTE) have a strictly positive solution then 
the Markov process has a product form. 

\medskip

How does Feinberg result connect with product form Markovian Petri
nets~? 

\medskip

A first observation is that chemical reaction networks and Petri nets are two different
descriptions of the same object. This has been identified by different
authors in the biochemical community: see for instance \cite{ALSo} and the
references therein. Conversely, Petri nets were originally introduced
by Carl Adam Petri to model chemical processes, see \cite{PeRe}. 

A second observation was made recently by Anderson, Craciun, and
Kurtz~\cite{ACKu}. They observe that the NLE of Feinberg and the NLTE of
Kelly are the same. It implies that if a chemical network has
deficiency zero, then the stochastic dynamic model has a
product form. 

In the present paper, we combine the two observations. The
Deficiency Zero condition provides a sufficient condition for a
Markovian Petri net to have a product form. The Deficiency Zero condition is
equivalent to another criterion known in the Petri net
literature~\cite{HMSS}. The advantage is that deficiency is easy to
compute and handle. 

\medskip

The class of Petri nets whose Markovian version have a product form is an
interesting one. It is therefore natural to study how this class 
intersects with the classical families of Petri nets: state machines
and free-choice Petri nets. We use the simplicity of the Deficiency
Zero condition to carry out this study. 

The central result that we prove is, in a sense, a negative result. 
We show that within the class of free-choice Petri
nets, the only ones which have a product form are closely related to state
machines. We also show that the Markovian state
machines are ``equivalent'' to Jackson networks. The latter form the
most basic and classical example of product form queueing networks. 

\medskip

A conference version of the present paper appeared in
\cite{MaNg09}. Compared with \cite{MaNg09}, additional results
have been proved: the equivalence between deficiency 0
and the condition of Haddad $\&$ al \cite{HMSS}
(Prop. \ref{pr:HMSS2}), and the results in Section \ref{sse-addi}.

\section{Model}\label{sec:model}

We use the notation $\R^* = \R - \{0\}$. 
The coordinate-wise ordering of $\R^k$ is denoted by the symbol
$\leqslant$. We say that $x\in \R^k$ is {\em strictly positive} if
$x_i>0$ for all $i$. We denote by ${\bf 1}_{S}$ the indicator
function of $S$, that is the mapping taking value 1 inside $S$ and 0
outside. 

\subsection{Petri nets}\label{subsec:PN}

Our definition of Petri net is standard, with weights on the arcs. 

\begin{mydef}[Petri net]\label{def:PN}
A {\em Petri net} is a 6-tuple $(\mathcal P, \mathcal T, \mathcal F, I, O,
M_0)$ where:
\begin{itemize}
\item $(\mathcal P, \mathcal T, \mathcal F)$ is a directed bipartite graph, that
is, $\mathcal P$ and $\mathcal T$ are non-empty and finite disjoint
sets, and $\mathcal
F$ is a subset of $(\mathcal P \times \mathcal T) \cup (\mathcal T \times \mathcal
P)\, ;$

\item $I: \mathcal T \rightarrow \mathbb N^{\mathcal P}$, $O: \mathcal T
\rightarrow \mathbb N^{\mathcal P}$ are such that $[ I(t)_p > 0 \Leftrightarrow
(p, t) \in \mathcal F ] $ and $[ O(t)_p > 0 \Leftrightarrow (t, p) \in
  \mathcal F ] \,;$

\item $M_0$ belongs to $\mathbb N^{\mathcal P}.$
\end{itemize}
\end{mydef}

The elements of $\mathcal P$ are called {\em places}, those of $\mathcal T$ are
called {\em transitions}. The 5-tuple $(\mathcal P, \mathcal T,
\mathcal F,I,O)$ is called the {\em Petri graph}. 
The vectors $I(t)$ and $O(t)$, $t \in \mathcal T$, are called
the {\em input bag} and the {\em output bag} of the
transition $t$. The {\em weight} of the arc $(p, t) \in \mathcal F$
(resp. $(t, p)$) is $I(t)_p$ (resp. $O(t)_p$). 

An element of $\mathbb N^{\mathcal P}$ is called a
{\em marking}, and $M_0$ is called the {\em initial marking}. 

\medskip

Petri nets inherit the usual terminology of graph theory. 
Graphically, a Petri net is represented by a directed graph in which places are represented
by circles and transitions by rectangles. The initial marking is also
materialized: if $M_0(p) = k$, then $k$ tokens are drawn inside the
circle $p$. By convention, the weights different from 1 are
represented on the arcs. See Figures \ref{fig:LTEnotNLTE} or \ref{fi-4} for examples. 

\medskip

A Petri net is a dynamic object. The Petri graph always remains unchanged,
but the marking evolves according to the {\em firing rule}. 
A transition $t$ is {\em enabled} in the marking $M$ if 
$M \geq I(t)$, then  $t$ may {\em fire} which transforms the marking
from $M$ into 
\[
M'= M - I(t) + O(t)\:.
\]
We write $M \xrightarrow{t} M'$. A marking $M'$ is {\em reachable}
from a marking $M$ if there exists a (possibly empty) sequence of
transitions $t_1,...,t_k$ and sequence of markings
$M_1,...,M_{k-1},$ such that 
\[
M \xrightarrow{t_1} M_1
\xrightarrow{t_2} \cdots \xrightarrow{t_{k-1}} M_{k-1}
\xrightarrow{t_k} M'\:.
\]
We
denote by $\mathcal R(M)$ the set of markings which are reachable from $M$.

\begin{mydef}[Marking graph]\label{def:markinggraph}
The {\em marking graph} of a Petri net with initial marking $M_0$
is the directed graph with 
\begin{itemize}
\item nodes: $\mathcal R(M_0)\,,$ arcs: $M \rightarrow M' \mbox{ if
}\, \exists t \in \cT, \ M \xrightarrow{t} M'.$
\end{itemize}
\end{mydef}

The marking graph defines the state space on which the marking may
evolve. Observe that the marking graph may be finite or infinite. 
In Section \ref{subsec:SPN}, we will define a {\em Markovian Petri
  net} as a continuous-time Markovian process evolving on the marking
graph. 

\medskip

The analysis of Petri nets relies heavily on linear algebra techniques, the
central object being the incidence matrix. 

The {\em incidence matrix} $N$ of the Petri net $(\mathcal P, \mathcal T,
\mathcal F, I, O, M_0)$ is the  $(\mathcal P \times \mathcal
T)$-matrix $N$ defined by: 
\begin{equation}\label{eq-def-incidence}
N_{s,t} = O(t)_s - I(t)_s\,.
\end{equation}

\begin{myeg}
Figure \ref{fig:LTEnotNLTE} represents a Petri net with
places $\{p_1, p_2, p_3, p_4\}$ and transitions $\{t_1, t_2, t_3,
t_4, t_5, t_6\}$.  The initial marking is $M_0=(2, 0, 0, 1).$ 
Here all the weights of the arcs are equal to 1. 

\begin{figure}[H]
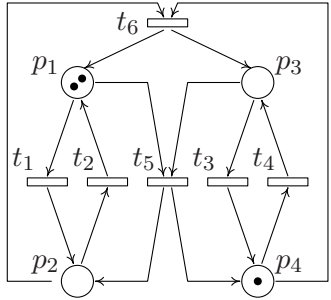

\centering
\begin{mfpic}{-75}{75}{-60}{75}
\tlabelsep{3pt}
\circle{(-33.75, -37.5), 6}
\tlabel[br](-37.5, -37.5){$p_2$}
\circle{(33.75, -37.5), 6}
\tlabel[bl](37.5, -37.5){$p_4$}
\circle{(-33.75, 37.5), 6}
\tlabel[br](-37.5, 37.5){$p_1$}
\circle{(33.75, 37.5), 6}
\tlabel[bl](37.5, 37.5){$p_3$}

\rect{(-7.5, 1.5), (7.5, -1.5)}
\tlabel[br](-1.5, 1.5){$t_5$}
\rect{(-7.5, 61.5), (7.5, 58.5)}
\tlabel[cr](-7.5, 60){$t_6$}
\rect{(-30, -1.5), (-15, 1.5)}
\tlabel[br](-24, 1.5){$t_2$}
\rect{(-52.5, -1.5), (-37.5, 1.5)}
\tlabel[br](-46.5, 1.5){$t_1$}
\rect{(15, -1.5), (30, 1.5)}
\tlabel[br](21, 1.5){$t_3$}
\rect{(37.5, -1.5), (52.5, 1.5)}
\tlabel[br](43.5, 1.5){$t_4$}

\arrow \polyline{(-43.5, -37.5), (-60, -37.5), (-60, 67.5), (-3.75, 67.5),
(-1.5, 62.25)}
\arrow \polyline{(40.5, -37.5), (60, -37.5), (60, 67.5), (3.75, 67.5), (1.5,
62.25)}
\arrow \polyline{(-1.5, 57.25), (-31, 43)}
\arrow \polyline{(1.5, 57.25), (31, 43)}

\arrow \polyline{(-27, 37.5), (-7.5, 37.5), (-1.5, 2.25)}
\arrow \polyline{(27, 37.5), (7.5, 37.5), (1.5, 2.25)}
\arrow \polyline{(-1.5, -2.25), (-7.5, -37.5), (-27, -37.5)}
\arrow \polyline{(1.5, -2.25), (7.5, -37.5), (27, -37.5)}

\arrow \polyline{(-35.25, 30.75), (-45, 2.25)}
\arrow \polyline{(-45, -2.25), (-35.25, -30.75)}
\arrow \polyline{(-32.25, -30.75), (-22.5, -2.25)}
\arrow \polyline{(-22.5, 2.25), (-32.25, 30.75)}

\arrow \polyline{(35.25, -30.75), (45, -2.25)}
\arrow \polyline{(45, 2.25), (35.25, 30.75)}
\arrow \polyline{(32.25, 30.75), (22.5, 2.25)}
\arrow \polyline{(22.5, -2.25), (32.25, -30.75)}

\point[3pt]{(-32, 39), (-35, 36), (33.75, -37.5)}

\end{mfpic}
\caption{Petri net.}
\label{fig:LTEnotNLTE}
\end{figure}

\end{myeg}

\subsection{Deficiency and weak reversibility}\label{subsec:chem}

Consider a Petri net $(\mathcal P, \mathcal T, \mathcal F, I, O,
M_0)$. From now on, we assume that no two transitions have the same
pair input bag - output bag. 
So we can identify each $t$ with the ordered pair
$(I(t), O(t))$. 

In the context of Markovian Petri nets with race policy (see \S
\ref{subsec:SPN}), 
this assumpion is made without loss of generality. Indeed, several
transitions with the same pair of bags can be replaced by a
single transition whose transition rate is the sum of the original
rates. Also this transformation does not modify the deficiency or weak
reversibility (to be defined below). 

\medskip

Under this assumption, the Petri net can  be viewed as a
triple $(\mathcal P, \mathcal T \subset
\N^{\mathcal P} \times \N^{\mathcal P}, M_0 \in \N^{\mathcal P})$. (In
particular, the flow relation $\cF$ is encoded in $\cT$.) 

Petri nets have appeared with this presentation in different contexts and under
different names: 
{\em vector addition systems} (see for instance \cite{reut90}), or {\em chemical reaction
networks} (see for instance \cite{feinberg79,ACKu}). 

\medskip

In the chemical context, the elements of $\mathcal P$ are 
species. The marking is the number of molecules of the different
species. The elements of $\mathcal T$ are {\em reactions}. 
A reaction
$(c, d) \in \mathbb N^{\mathcal P} \times \mathbb N^{\mathcal P}$ is represented
as follows:
\[\sum_{p \in \mathcal P} c_pp \longrightarrow \sum_{p \in \mathcal P} d_pp.\]

\begin{myeg}
The ``chemical'' form of the Petri net in Figure \ref{fig:LTEnotNLTE} is:
\begin{equation}\label{eq-chem}
p_1 \rightleftarrows  p_2 \qquad p_3 \rightleftarrows  p_4 \qquad p_1
+ p_3  \rightleftarrows  p_2 + p_4 \:.
\end{equation}
\end{myeg}

Let us now introduce two notions, deficiency and weak reversibility, which are borrowed from the
chemical literature.

\begin{mydef}[Reaction graph]\label{def:reactiongraph}
Let $(\mathcal P, \mathcal T\subset
\N^{\mathcal P} \times \N^{\mathcal P},  M)$ be a Petri net. 
A {\em complex} is a vector $u$ in $\N^{\cP}$ such that: $\exists v
\in \N^\cP, \ (u,v)\in \cT \mbox{ or }  (v,u)\in \cT$. The set of all
complexes is denoted by $\mathcal C$. The {\em reaction graph}
associated to the Petri net is the directed
graph with nodes: $\cC$, arcs: $u \rightarrow v$ if $(u,v)\in \cT$. 
\end{mydef}

Let $A$ be the {\em node-arc incidence matrix} of
the reaction graph, that is the $(\cC\times\cT)$-matrix defined by
\begin{equation}\label{eq-A}
A_{u,t} = - {\bf 1}_{\{I(t)=u\}} + {\bf 1}_{\{O(t)=u\}}\:.
\end{equation}

\begin{mylem}\label{lem:rankA}
Consider a Petri net with set of complexes $\cC$. Let $\ell$ be the number of
connected components of the reaction graph. The rank of the node-arc
incidence matrix satisfies:
  \begin{equation}
    \mbox{rank}(A) = |\cC| - \ell\,.
    \label{eq:rankA}
  \end{equation}
\end{mylem}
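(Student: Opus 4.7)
This is the classical node-arc rank formula for a directed graph, and I would prove it by identifying the left kernel of $A$ with the space of functions on $\cC$ that are constant on each connected component of the reaction graph.

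First, I would fix a vector $x \in \R^{\cC}$ and compute the row vector $x^{T} A$. By the definition \eref{eq-A}, the entry indexed by an arc $t \in \cT$ is
\[
(x^{T} A)_{t} \;=\; \sum_{u \in \cC} x_{u} \bigl(-{\bf 1}_{\{I(t)=u\}} + {\bf 1}_{\{O(t)=u\}}\bigr) \;=\; x_{O(t)} - x_{I(t)}.
\]
Hence $x$ lies in the left kernel of $A$ if and only if $x_{u} = x_{v}$ for every arc $u \to v$ of the reaction graph. Since the connectivity used to count $\ell$ is that of the underlying undirected graph, and the relation $x_u = x_v$ is symmetric in $u,v$, this is equivalent to $x$ being constant on each connected component of the reaction graph.

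The space of such functions has dimension exactly $\ell$: a basis is given by the indicator vectors of the $\ell$ connected components. Thus the left nullity of $A$ equals $\ell$, and by rank-nullity (applied to $A^T$, which has the same rank as $A$) we obtain
\[
\mbox{rank}(A) \;=\; |\cC| - \ell,
\]
as claimed.

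The proof is essentially routine; the only mild subtlety is the usual one of distinguishing weak from strong connectivity. Since each arc contributes opposite signs at its two endpoints, the left-kernel condition is symmetric and therefore sees only the undirected structure of the reaction graph, which matches the convention used to define $\ell$. Consequently no additional hypothesis (such as weak reversibility, to be discussed later) is needed at this stage.
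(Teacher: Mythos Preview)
Your proof is correct and follows essentially the same approach as the paper: both compute the left kernel of $A$ by observing that $(xA)_t = x_{O(t)} - x_{I(t)}$, conclude that the kernel consists of the vectors constant on each connected component, and then apply rank--nullity. The only cosmetic difference is that the paper first treats the case $\ell=1$ explicitly (walking along an undirected path to propagate equalities) before remarking that the general case is similar, whereas you handle all $\ell$ at once.
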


\begin{proof}
 Assume that $\ell =1$. Consider $x\in \R^{\cC}-\{(0,\dots, 0)\}$
such that $xA=(0,\dots, 0)$. Let $C$ be such that $x_C\neq
0$. Consider $D\in \cC$. Since $\ell =1$, there exists an undirected path
$(C=C_0)\edge C_1 \edge \cdots \edge (C_k=D)$ in
the reaction graph. Assume wlog that $C_i \rightarrow C_{i+1} $ and let $t_i\in \cT$ be such that $I(t_i)=C_i,
O(t_i)=C_{i+1}$. By definition of $A$, we have $(xA)_{t_i} =
x_{C_{i+1}} -x_{C_i}$. So we have $x_{C_i}=x_{C_{i+1}}$ for all $i$,
and $x_C=x_D$. We have proved that 
$xA=(0,\dots,0) \implies x \in \R (1,\dots ,1)$. 
Conversely, by definition of $A$, we have $(1,\dots ,1)A =
(0,\dots,0)$. Hence $\mbox{dim} \ ker(A) = 1$, and by the rank theorem, $\mbox{rank}(A)= |\cC| -1$. 
For a general value of $\ell$,
we get similarly that $\mbox{rank}(A)= |\cC| -\ell$.
\end{proof}

A central notion in what follows is the {\em deficiency} of a Petri net. 

\begin{mydef} [Deficiency]\label{def:deficiency}
With the above notations, the {\em deficiency} of the Petri net is
\[\delta =  |\mathcal C|  - \ell - \mbox{rank}(N) = \mbox{rank}(A) - \mbox{rank}(N)\:.\]
\end{mydef}

Of particular importance are the Petri nets
with {\bf deficiency 0}. This class will be central in the study 
of Markovian Petri nets having a product form, see 
Section \ref{sec:prodres}. Such Petri nets are ``extremal'', in a
sense made precise by the next proposition.

\begin{myprop}\label{pr-nonneg}
For all $x \in \R^{\cT}$: 
\begin{equation}\label{eq:eliminationAN2}
\bigl[ Ax = 0 \bigr] \implies \bigl[ Nx
  = 0 \bigr] \:.
\end{equation}
In particular it implies that: $\mbox{rank}(A) \geq
\mbox{rank}(N)$. Equivalently, the deficiency of a Petri net is always
greater or equal to 0. 
\end{myprop}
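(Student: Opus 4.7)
The plan is to exhibit $N$ as a matrix factoring through $A$, after which both implications become immediate.

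More precisely, I would define the $(\cP \times \cC)$-matrix $Y$ whose columns are the complexes themselves, viewed as vectors in $\N^{\cP}$: set $Y_{s,u} = u_s$ for $s \in \cP$, $u \in \cC$. The key identity to verify is
\begin{equation*}
N = Y A \,.
\end{equation*}
This is a one-line computation from the definitions: for $s \in \cP$ and $t \in \cT$,
\begin{equation*}
(YA)_{s,t} = \sum_{u \in \cC} u_s\, A_{u,t} = \sum_{u \in \cC} u_s \bigl(\mathbf{1}_{\{O(t)=u\}} - \mathbf{1}_{\{I(t)=u\}}\bigr) = O(t)_s - I(t)_s = N_{s,t}\,.
\end{equation*}
Note that $I(t)$ and $O(t)$ are complexes by definition of $\cC$, so the two indicator sums really do pick out these vectors.

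Once the factorization $N = YA$ is in hand, the implication \eref{eq:eliminationAN2} is immediate: if $Ax = 0$ then $Nx = Y(Ax) = 0$. Consequently $\ker(A) \subseteq \ker(N)$, so $\dim\ker(A) \leq \dim\ker(N)$, and the rank-nullity theorem applied in $\R^{\cT}$ gives $\mbox{rank}(A) \geq \mbox{rank}(N)$. Combining with Definition~\ref{def:deficiency} and Lemma~\ref{lem:rankA} yields $\delta = |\cC| - \ell - \mbox{rank}(N) = \mbox{rank}(A) - \mbox{rank}(N) \geq 0$.

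There is essentially no obstacle here: the whole proof hinges on spotting the factorization $N = YA$, which simply says that the effect of a transition on the marking is the difference of the output complex and the input complex. The rest is linear algebra bookkeeping.
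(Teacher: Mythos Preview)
Your proof is correct and is essentially the same argument as the paper's, just repackaged: the paper computes $(Nx)_s$ directly and shows it equals $\sum_{c\in\cC} c_s\,(Ax)_c$, which is precisely your identity $N=YA$ written out entrywise. Your matrix-factorization phrasing is slightly cleaner, but the underlying computation and the rank-nullity conclusion are identical.
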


The non-negativity of the deficiency appears in Feinberg~\cite{feinberg79}, using a different
argument. 

\begin{proof}
Using the definition of the matrix $A$, we can rewrite the condition
$Ax = 0$ as:
\begin{equation}
\forall c \in \cC, \qquad \sum_{t: O(t) = c} x_t - \sum_{t: I(t) = c}
x_t = 0\,, \label{eq:Ax2}
\end{equation}
Now for each place $s$, we have:
\begin{eqnarray*}
\sum_{t \in \cT}N_{s,t} x_t &=& \sum_{t \in \cT}[O(t)_s - I(t)_s] x_t \\
&=&  \sum_{c \in \cC} \bigl( \sum_{t: O(t) = c} c_s x_t - \sum_{t: I(t) = c} c_s
x_t \bigr)  \ = \ 
\sum_{c \in \cC} c_s \bigl( \sum_{t: O(t) = c} x_t - \sum_{t: I(t) = c}
x_t \bigr)\,. \\
\end{eqnarray*}
Using \eref{eq:Ax2}, this last sum is equal to $0$.

The inequality $\mbox{rank}(A) \geq \mbox{rank}(N)$ is equivalent to
$\mbox{dim ker}(A) \leq \mbox{dim ker}(N)$, which follows immediately from
\eref{eq:eliminationAN2}.
\end{proof}

The second central notion is {\em weak reversibility}. 

\begin{mydef}[Weak reversibility]\label{def:weakrevers}
A Petri net is {\em weakly reversible (WR)} if every connected component of the
reaction
graph is strongly connected.
\end{mydef}

Weak reversibility is a restrictive property, see Section
\ref{subsec:freechoice}. 
It is important to observe that a connected and weakly reversible
Petri net is not necessarily strongly connected. An example is given
below. 

\begin{figure}[H]
\centering
\begin{mfpic}{-95}{95}{-20}{50}
\shiftpath{(-30, 0)}\circle{(-15, 0), 6}
\shiftpath{(-30, 0)}\rect{(7.5, 7.5), (10.5, -7.5)}
\shiftpath{(-30, 0)}\rect{(-40.5, 7.5), (-37.5, -7.5)}

\arrow\shiftpath{(-30, 0)}\polyline{(-36, 0), (-22.5, 0)}
\arrow\shiftpath{(-30, 0)}\polyline{(-7.5, 0), (6, 0)}

\tlabelsep{3pt}
\tlabel[bc](-45, 7.5){$p$}

\point[3pt]{(67.5, 0), (37.5, 0)}
\arrow\curve{(38.25, .75), (52.5, 4.5), (66.75, .75)}
\arrow\curve{(66.75, -.75), (52.5, -4.5), (38.25, -.75)}

\tlabel[br](36.75, .75){$\emptyset$}
\tlabel[bl](68.25, .75){$p$}

\end{mfpic}
\end{figure}

Elementary circuits of the
reaction graph can be identified with the so-called ``minimal closed
support T-invariants'' of the Petri net literature
(see \cite{BoSe}). 
In particular, a Petri net is weakly reversible if and only if it is covered by minimal closed
support T-invariants. Such Petri nets are called {\em $\Pi$-nets} in \cite{HMSS}.

\paragraph{Weak reversibility and deficiency 0.}

Weak reversibility and deficiency 0 are two independent
properties, see Figure \ref{fi-4}. The upper-left Petri net is an instance of the famous
``dining philosopher'' model. The Petri nets in Fig. \ref{fi-4} are
live and bounded except for the upper-right one.

\begin{figure}[H]
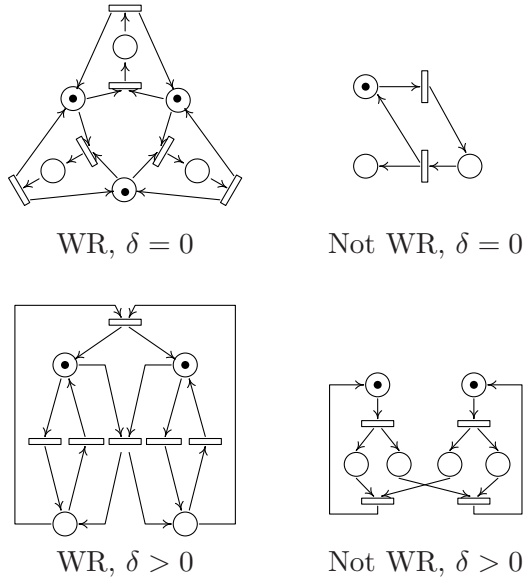

\centering
\begin{mfpic}{-150}{150}{-112.5}{112.5}
\tlabelsep{3pt}

\shiftpath{(-75, 90)}\circle{(0,0), 4.5}
\arrow\shiftpath{(-75, 90)}\polyline{(0, 6), (0, 12.75)}
\arrow\shiftpath{(-75, 90)}\polyline{(0, -12.75), (0, -6)}
\shiftpath{(-75, 90)}\rect{(-6, 13.5), (6, 15.75)}
\shiftpath{(-75, 90)}\rect{(-6, -15.75), (6, -13.5)}
\arrow\shiftpath{(-75, 90)}\polyline{(-5.25, 12.75), (-18, -15.75)}
\arrow\shiftpath{(-75, 90)}\polyline{(5.25, 12.75), (18, -15.75)}
\arrow\shiftpath{(-75, 90)}\polyline{(-14.25, -19.5), (-1.5, -16.5)}
\arrow\shiftpath{(-75, 90)}\polyline{(14.25, -19.5), (1.5, -16.5)}
\shiftpath{(-75, 90)}\circle{(-19.5, -19.5), 4.5}

\shiftpath{(-75, 90)}\shiftpath{(-27, -46.764)}\rotatepath{(0, 0),
120}\circle{(0,0), 4.5}
\arrow\shiftpath{(-75, 90)}\shiftpath{(-27, -46.764)}\rotatepath{(0, 0),
120}\polyline{(0, 6), (0, 12.75)}
\arrow\shiftpath{(-75, 90)}\shiftpath{(-27, -46.764)}\rotatepath{(0, 0),
120}\polyline{(0, -12.75), (0, -6)}
\shiftpath{(-75, 90)}\shiftpath{(-27, -46.764)}\rotatepath{(0, 0),
120}\rect{(-6, 13.5), (6, 15.75)}
\shiftpath{(-75, 90)}\shiftpath{(-27, -46.764)}\rotatepath{(0, 0),
120}\rect{(-6, -15.75), (6, -13.5)}
\arrow\shiftpath{(-75, 90)}\shiftpath{(-27, -46.764)}\rotatepath{(0, 0),
120}\polyline{(-5.25, 12.75), (-18, -15.75)}
\arrow\shiftpath{(-75, 90)}\shiftpath{(-27, -46.764)}\rotatepath{(0, 0),
120}\polyline{(5.25, 12.75), (18,
-15.75)}
\arrow\shiftpath{(-75, 90)}\shiftpath{(-27, -46.764)}\rotatepath{(0, 0),
120}\polyline{(-14.25, -19.5),
(-1.5, -16.5)}
\arrow\shiftpath{(-75, 90)}\shiftpath{(-27, -46.764)}\rotatepath{(0, 0),
120}\polyline{(14.25, -19.5), (1.5, -16.5)}
\shiftpath{(-75, 90)}\shiftpath{(-27, -46.764)}\rotatepath{(0, 0),
120}\circle{(-19.5, -19.5), 4.5}

\shiftpath{(-75, 90)}\shiftpath{(27, -46.764)}\rotatepath{(0, 0),
-120}\circle{(0,0), 4.5}
\arrow\shiftpath{(-75, 90)}\shiftpath{(27, -46.764)}\rotatepath{(0, 0),
-120}\polyline{(0, 6), (0, 12.75)}
\arrow\shiftpath{(-75, 90)}\shiftpath{(27, -46.764)}\rotatepath{(0, 0),
-120}\polyline{(0, -12.75), (0, -6)}
\shiftpath{(-75, 90)}\shiftpath{(27, -46.764)}\rotatepath{(0, 0),
-120}\rect{(-6, 13.5), (6, 15.75)}
\shiftpath{(-75, 90)}\shiftpath{(27, -46.764)}\rotatepath{(0, 0),
-120}\rect{(-6, -15.75), (6, -13.5)}
\arrow\shiftpath{(-75, 90)}\shiftpath{(27, -46.764)}\rotatepath{(0, 0),
-120}\polyline{(-5.25, 12.75), (-18, -15.75)}
\arrow\shiftpath{(-75, 90)}\shiftpath{(27, -46.764)}\rotatepath{(0, 0),
-120}\polyline{(5.25, 12.75), (18,
-15.75)}
\arrow\shiftpath{(-75, 90)}\shiftpath{(27, -46.764)}\rotatepath{(0, 0),
-120}\polyline{(-14.25, -19.5),
(-1.5, -16.5)}
\arrow\shiftpath{(-75, 90)}\shiftpath{(27, -46.764)}\rotatepath{(0, 0),
-120}\polyline{(14.25, -19.5), (1.5, -16.5)}
\shiftpath{(-75, 90)}\shiftpath{(27, -46.764)}\rotatepath{(0, 0),
-120}\circle{(-19.5, -19.5), 4.5}


\point[3pt]{(-94.5, 70.5)}
\point[3pt]{(-55.5, 70.5)}
\point[3pt]{(-75, 35.25)}

\tlabel[cc](-75, 15){WR, $\delta = 0$}

\shiftpath{(15, 75)}\circle{(0,0), 4.5}
\arrow\shiftpath{(15, 75)}\polyline {(5.25, 0), (20.25, 0)}
\shiftpath{(15, 75)}\rect{(21, 6), (23.25, -6)}
\arrow\shiftpath{(15, 75)}\polyline{(24, 0), (37.5, -24.75 )}
\shiftpath{(15, 75)}\circle{(39, -30), 4.5}
\arrow\shiftpath{(15, 75)}\polyline{(33.75, -30), (24, -30)}
\shiftpath{(15, 75)}\rect{(21, -24), (23.25, -36)}
\arrow\shiftpath{(15, 75)}\polyline{(20.25, -28.5), (4.5, -3)}
\arrow\shiftpath{(15, 75)}\polyline{(20.25, -30), (5.25, -30)}
\shiftpath{(15, 75)}\circle{(0, -30), 4.5}

\point[3pt]{(15, 75)}

\tlabel[cc](37.5, 15){Not WR, $\delta = 0$}

\shiftpath{(-75, -60)}\circle{(-22.5, -30), 4.5}
\shiftpath{(-75, -60)}\circle{(22.5, -30), 4.5}
\shiftpath{(-75, -60)}\circle{(-22.5, 30), 4.5}
\shiftpath{(-75, -60)}\circle{(22.5, 30), 4.5}

\shiftpath{(-75, -60)}\rect{(-6, 2.25), (6, 0)}
\shiftpath{(-75, -60)}\rect{(-6, 47.25), (6, 45)}
\shiftpath{(-75, -60)}\rect{(-21, 2.25), (-8, 0)}
\shiftpath{(-75, -60)}\rect{(-24, 2.25), (-36, 0)}
\shiftpath{(-75, -60)}\rect{(8, 2.25), (21, 0)}
\shiftpath{(-75, -60)}\rect{(24, 2.25), (36, 0)}

\arrow \shiftpath{(-75, -60)}\polyline{(-27.75, -30), (-41.25, -30), (-41.25,
52.5), (-3.75, 52.5), (-1.5, 48)}
\arrow \shiftpath{(-75, -60)}\polyline{(27.75, -30), (41.25, -30), (41.25,
52.5), (3.75, 52.5),
(1.5, 48)}
\arrow \shiftpath{(-75, -60)}\polyline{(-1.5, 44.25), (-18.75, 32.25)}
\arrow \shiftpath{(-75, -60)}\polyline{(1.5, 44.25), (18.75, 32.25)}

\arrow \shiftpath{(-75, -60)}\polyline{(-17.25, 30), (-7.5, 30), (-1.5, 3)}
\arrow \shiftpath{(-75, -60)}\polyline{(17.25, 30), (7.5, 30), (1.5, 3)}
\arrow \shiftpath{(-75, -60)}\polyline{(-1.5, -3), (-7.5, -30), (-17.25, -30)}
\arrow \shiftpath{(-75, -60)}\polyline{(1.5, -3), (7.5, -30), (17.25, -30)}

\arrow \shiftpath{(-75, -60)}\polyline{(-24, 24.75), (-30, 3)}
\arrow \shiftpath{(-75, -60)}\polyline{(-30, -0.75), (-24, -24.75)}
\arrow \shiftpath{(-75, -60)}\polyline{(-21, -24.75), (-15, -.75)}
\arrow \shiftpath{(-75, -60)}\polyline{(-15, 3), (-21, 24.75)}

\arrow \shiftpath{(-75, -60)}\polyline{(24, -24.75), (30, -.75)}
\arrow \shiftpath{(-75, -60)}\polyline{(30, 3), (24, 24.75)}
\arrow \shiftpath{(-75, -60)}\polyline{(21, 24.75), (15, 3)}
\arrow \shiftpath{(-75, -60)}\polyline{(15, -.75), (21, -24.75)}

\point[3pt]{(-97.5, -30), (-52.5, -30)}
\tlabel[cc](-75, -105){WR, $\delta > 0$}

\shiftpath{(19.5, -37.5)}\circle{(0,0), 4.5}
\arrow\shiftpath{(19.5, -37.5)}\polyline{(0, -5.25), (0, -12.75)}
\shiftpath{(19.5, -37.5)}\rect{(-6, -13.5), (6, -15.75)}
\arrow\shiftpath{(19.5, -37.5)}\polyline{(-1.5, -16.5), (-8, -24.75)}
\shiftpath{(19.5, -37.5)}\circle{(-8, -30), 4.5}
\arrow\shiftpath{(19.5, -37.5)}\polyline{(1.5, -16.5), (8, -24.75)}
\shiftpath{(19.5, -37.5)}\circle{(8, -30), 4.5}

\shiftpath{(19.5, -37.5)}\circle{(36,0), 4.5}
\arrow\shiftpath{(19.5, -37.5)}\polyline{(36, -5.25), (36, -12.75)}
\shiftpath{(19.5, -37.5)}\rect{(30, -13.5), (42, -15.75)}
\arrow\shiftpath{(19.5, -37.5)}\polyline{(34.5, -16.5), (27, -24.75)}
\shiftpath{(19.5, -37.5)}\circle{(27, -30), 4.5}
\arrow\shiftpath{(19.5, -37.5)}\polyline{(37.5, -16.5), (45, -24.75)}
\shiftpath{(19.5, -37.5)}\circle{(45, -30), 4.5}

\shiftpath{(19.5, -37.5)}\rect{(-6, -42.75), (6, -45)}
\arrow\shiftpath{(19.5, -37.5)}\polyline{(-8, -35.25), (-1.5, -42)}
\arrow\shiftpath{(19.5, -37.5)}\polyline{(27, -35.25), (1.5, -42)}
\arrow\shiftpath{(19.5, -37.5)}\polyline{(0, -45.75), (0, -48.75), (-18,
-48.75), (-18, 0),
(-5.25,
0)}

\shiftpath{(19.5, -37.5)}\rect{(30, -42.75), (42, -45)}
\arrow\shiftpath{(19.5, -37.5)}\polyline{(8, -35.25), (34.5, -42)}
\arrow\shiftpath{(19.5, -37.5)}\polyline{(45, -35.25), (37.5, -42)}
\arrow\shiftpath{(19.5, -37.5)}\polyline{(36, -45.75), (36, -48.75), (54,
-48.75), (54, 0),
(41.25,
0)}

\point[3pt]{(19.5, -37.5), (55.5, -37.5)}
\tlabel[cc](37.5, -105){Not WR, $\delta > 0$}

\end{mfpic}

\caption{Deficiency zero and weak reversibility are independent.}
\label{fi-4}
\end{figure}

\paragraph{Algorithmic complexity.}

Weak reversibility and deficiency 0 are algorithmically simple to
check. Let us determine the time complexity 
of the algorithms with respect to the size of the Petri net (number of places
and transitions). 

Observe first that the number of complexes is bounded by $2\cT$. 
Building the reaction graph from the Petri graph 
can be done in time $\mathcal{O}(\cP\cT^2)$. 
A depth-first-search algorithm on the reaction graph enables to check
the weak reversibility and to compute the number of connected
components ($\ell$). The DFS algorithm runs in time
$\mathcal{O}(\cP\cT)$.
Computing the rank of the incidence matrix can be done in
time $\mathcal{O}(\cP \cT^2)$ using a Gaussian elimination. 

Globally the complexity is $\mathcal{O}(\cP \cT^2)$ for computing
the deficiency as well as for checking weak-reversibility. 

\medskip

\subsection{Markovian Petri nets with race policy}\label{subsec:SPN}

A Petri net is a logical object with no physical time 
involved. There exist several alternative ways to define timed models of Petri
nets, see for instance \cite{ABBCCC,BCOQ}. We consider the model of
Markovian Petri nets with race policy. 
The rough description
is as follows. 

\medskip

With each enabled transition is associated 
a ``countdown
clock'' whose positive initial value is set at random. When a clock
reaches 0, the corresponding transition fires. This changes the set
of enabled transitions and all the clocks get reinitialized. 
The initial values of the clocks are chosen independently, according to an
exponential distribution whose rate depends on the transition and on
the current marking. With 
probability 1, no two clocks reach zero at the same time so the model
is  unambiguously defined.  
Enabled transitions are involved in a ``race'': the transition to fire
is the one whose 
clock will reach zero first. 

\medskip

We now proceed to a formal definition of the model. 

When $I(t)=O(t)$, the firing of transition $t$ does not modify the
marking, nor the distribution of the value of the clocks (memoryless
property of the exponential). Without loss of
generality, we assume from now on that $I(t)\neq O(t)$ 
for all $t$. 

\begin{mydef}[Markovian Petri net with race policy]\label{def:SPN}
A {\em Markovian Petri net (with race policy)} is  formed by a Petri
net $(\cP,\cT,\cF,I,O,M_0)$ and a set
of {\em rate functions} $(\mu_t)_{t\in \cT}$, $\mu_t: \mathcal R(M_0)
\rightarrow \mathbb R_+^*$, satisfying 
\begin{equation}\label{eq:generalratefunction}
\mu_t(M) = \begin{cases}
\kappa_t \Psi(M -I(t))\Phi(M)  & \mbox{if } M
\geqslant I(t) \\
0 & \mbox{otherwise}
\end{cases}\:,
\end{equation}
for some constants $\kappa_t\in \R_+^*, \ t\in \cT$, and some functions
$\Psi$ and $\Phi$ valued in $\R_+^*$. 
The marking
evolves as a continuous-time jump Markov process with state space
$\mathcal R(M_0)$ and infinitesimal generator $Q=(q_{M,M'})_{M,M'}$,
given by:
\begin{equation}\label{eq:Q}
\forall M, \ \forall M' \neq M, \qquad q_{M, M'} = \sum_{t: M
  \xrightarrow{t} M'} 
\mu_t(M), \qquad \forall M, \quad q_{M,M} = -\sum_{M'\neq M} q_{M,M'}\,. 
\end{equation}
\end{mydef}

The shape \eref{eq:Q} for the infinitesimal generator is the
transcription of the informal description given at the beginning of
the section. 

\medskip

The condition \eref{eq:generalratefunction} for the rate functions
$(\mu_t)_{t\in \cT}$ is the
same as the one in \cite{HLTa} and \cite[Section 2]{HMSS}. (In \cite[Section
  3]{HMSS}, an even more general shape for the rate function is considered.)
Condition \eref{eq:generalratefunction} is specifically cooked up in order for 
the product form result of Theorem \ref{thm:Kelly} to hold, which
explains its artificial shape. 
This general condition englobes two classical types of rate functions: the constant rates
and the mass-action rates. 

\medskip

\textbf{Constant rates.} In the Petri net literature, the
standard assumption is that the firing rates are constant: 
\begin{equation}\label{eq-constant}
\exists \kappa_t  \in \R_+^*, \ \forall M \in \mathcal R(M_0),
I(t)\leqslant M, \qquad  \mu_t(M)= 
\kappa_t \:.
\end{equation}

\medskip

\textbf{Mass-action rates.} In the chemical literature, the rate is
often proportional to the
number of different subsets of tokens (i.e. molecules) that can be
involved in the firing (i.e. reaction). 
More precisely:
\begin{equation}\label{eq-massaction}
\forall M \in \mathcal R(M_0), I(t)\leqslant M, \qquad  \mu_t(M) = \kappa_t \prod_{p: I(t)_p \neq 0} \frac{M_p!}{(M_p -
I(t)_p)!} \,.
\end{equation}
Such rates are said to be of {\em mass-action} form and the
corresponding stochastic process has {\em mass-action kinetics}. To
obtain \eref{eq-massaction} from \eref{eq:generalratefunction}, set 
$\Phi, \Psi^{-1}: \N^{\cP} \rightarrow \R^*_+, x \mapsto \prod_p x_p !$.

\section{Product form results}\label{sec:prodres}

We are interested in the equilibrium behavior of Markovian Petri
nets.
This section presents the product form results which exist in the
literature. We gather results which were spread out, obtained independently either
in the Petri net community, or in the chemical one. 

\medskip

Let $Q$ be the infinitesimal generator of the marking process. 
An invariant measure $\pi$ of the process
is characterized by the balance equations $\pi Q =0$, that is: 
$\forall x \in \mathcal R(M_0)$, 
\begin{equation}
\pi(x) \sum_{t: x \geqslant I(t)} \mu_t(x) = \sum_{t: x \geqslant O(t)}\pi(x + I(t) -
O(t)) \mu_t(x + I(t) - O(t))\,. \label{eq:invariantmeasure}
\end{equation}
A stationary distribution is an invariant probability measure. It is
characterized by $\pi Q =0$, $\sum_x \pi(x)=1$. 
If $\pi$ is an invariant measure and $K=\sum_{x \in \mathcal R(M_0)} \pi(x) < +\infty$, then
$\pi/K= (\pi(x)/K)_x$ is a stationary distribution. 

\medskip

When the marking graph is strongly connected, the marking process is
irreducible. It follows from the basic Markovian theory that the
stationary distribution is unique when it exists (the ergodic case). 
When the state space is finite, irreducibility implies ergodicity. 

\subsection{Non-linear traffic equations and Kelly's Theorem}

\begin{mydef}[Non-linear traffic equations]\label{def:NLTE}
Consider a Markovian Petri net with general rates. Let $\cC$ be the set of complexes. 
We call {\em non-linear traffic equations (NLTE)} the
equations over the unknowns $(x_p)_{p\in \cP}$ defined by: $\forall C \in \cC$, 
\begin{equation}
\prod_{p: C_p\neq 0} x_p^{C_p} \sum_{t: I(t) = C}\kappa_t  = \sum_{t:
O(t) = C}\kappa_{t} \prod_{p: I(t)_p\neq 0}x_p^{I(t)_p}\,. \label{eq:NLTE}
\end{equation}
(With the usual convention that the empty product is 1.)
\end{mydef}

The NLTE can be viewed as a kind of balance equations (what goes in
equals what goes out) at the level of complexes. Their central role 
appears in 
the next theorem which is essentially due to Kelly~\cite[Theorem
  8.1]{kell79} (see also \cite[Theorem 4.1]{ACKu}). In Kelly's book, the setting is more
restrictive, but the proof carries over
basically unchanged. 

\begin{mythm}[Kelly]\label{thm:Kelly}
Consider a Markovian Petri net.  
Assume that the NLTE (\ref{eq:NLTE}) admit a strictly positive solution
$(u_p)_{p \in \mathcal P}$. Then the marking process of the Petri net
has an invariant measure $\pi$ defined by: $\forall x\in \cR(M_0)$, 
\begin{equation}
\pi(x) = \Phi(x)^{-1} \prod_{p \in \mathcal P} u_p^{x_p} \,. \label{eq:thmKelly}
\end{equation}
\end{mythm}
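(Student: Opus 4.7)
The plan is to verify directly that the candidate measure $\pi(x) = \Phi(x)^{-1}\prod_{p\in\cP} u_p^{x_p}$ satisfies the balance equations \eref{eq:invariantmeasure}, leveraging the fact that the NLTE can be read as a set of ``balance equations at the level of complexes''.

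First, I substitute the general rate expression $\mu_t(M) = \kappa_t\Psi(M-I(t))\Phi(M)$ into \eref{eq:invariantmeasure}. On the left side, the factor $\Phi(x)^{-1}$ from $\pi(x)$ cancels with the $\Phi(x)$ inside $\mu_t(x)$, leaving
\[
\prod_{p} u_p^{x_p}\,\sum_{t:\,x\geqslant I(t)}\kappa_t\,\Psi(x-I(t))\,.
\]
On the right side, setting $y = x+I(t)-O(t)$ one notices that $y-I(t)=x-O(t)$, and the factor $\Phi(y)^{-1}$ from $\pi(y)$ cancels with $\Phi(y)$ inside $\mu_t(y)$; after pulling the common monomial $\prod_p u_p^{x_p}$ out, the right side becomes
\[
\prod_{p}u_p^{x_p}\,\sum_{t:\,x\geqslant O(t)}\kappa_t\,\Psi(x-O(t))\prod_{p} u_p^{I(t)_p-O(t)_p}\,.
\]
So it suffices to show equality of these two sums (dividing by the common positive monomial is legitimate since $u$ is strictly positive).

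Next, I regroup both sums according to the complex that $t$ is being ``seen through''. On the left, group by $C = I(t)$; on the right, group by $C = O(t)$. This turns the identity to be proved into
\[
\sum_{C\in\cC}\mathbf{1}_{\{x\geqslant C\}}\Psi(x-C)\!\!\sum_{t:\,I(t)=C}\!\!\kappa_t
\;=\;\sum_{C\in\cC}\mathbf{1}_{\{x\geqslant C\}}\Psi(x-C)\prod_{p}u_p^{-C_p}\!\!\sum_{t:\,O(t)=C}\!\!\kappa_t\prod_{p}u_p^{I(t)_p}\,.
\]
At this point the key step is to match the equation complex by complex: the coefficient of $\mathbf{1}_{\{x\geqslant C\}}\Psi(x-C)$ on each side must agree for every $C\in\cC$. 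Multiplying the required identity by $\prod_p u_p^{C_p}$ (valid since $u_p>0$, and using $u_p^0=1$ to extend the product to all places), this reduces exactly to the NLTE \eref{eq:NLTE} evaluated at $C$, which holds by assumption.

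The only genuine step is recognizing the grouping by complex; once that is done the verification is a direct substitution. The main potential pitfall is book-keeping of the indicator $\mathbf{1}_{\{x\geqslant C\}}$: one must check that on the right side the enabling condition $x\geqslant O(t)$ correctly translates into $x\geqslant C$ after the change of summation index, and conversely on the left side $x\geqslant I(t)$ gives $x\geqslant C$; this uses that $I(t)$ and $O(t)$ are complexes by definition, so summing over $t$ with a fixed input or output bag $C$ is unambiguous. No irreducibility or finiteness hypothesis is needed for the statement, which is why $\pi$ is asserted only as an invariant measure and not a stationary distribution.
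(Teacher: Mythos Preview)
Your proof is correct. The paper does not actually supply its own proof of this theorem; it attributes the result to Kelly~\cite[Theorem~8.1]{kell79} and remarks that ``the proof carries over basically unchanged.'' Your direct verification---substituting $\pi$ into the balance equations~\eref{eq:invariantmeasure}, cancelling the $\Phi$ factors against those in the rate function~\eref{eq:generalratefunction}, regrouping each side by the complex $C$ (equal to $I(t)$ on the left, $O(t)$ on the right), and then observing that the resulting per-complex identity is exactly the NLTE~\eref{eq:NLTE}---is precisely the standard argument. One small wording quibble: when you write that the coefficients of $\mathbf{1}_{\{x\geqslant C\}}\Psi(x-C)$ ``must agree for every $C$'', you mean that it \emph{suffices} for them to agree (since the two sides are sums over $C$ of these terms), not that the identity of the sums forces termwise equality; your subsequent sentence makes clear this is what you intend.
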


We then say that $\pi$ has a {\em product form}: $\pi(x)$ decomposes as a
product over the places $p$ of terms depending only on the local
marking $x_p$. 

Observe that $\pi(x)>0$ for all $x$ in \eref{eq:thmKelly}. 
In particular it implies that the marking process is
irreducible.
On the other hand, the measure defined in \eref{eq:thmKelly} may have a finite or
infinite mass. When it has a finite mass, the marking process is
ergodic, and the normalization of $\pi$ is the unique stationary
distribution. 

In the case of mass-action rates \eref{eq-massaction}, we get
\[
\sum_{x\in \cR(M_0)} \pi(x) = \sum_{x\in \cR(M_0)} \prod_{p \in
  \mathcal P}\frac{u_p^{x_p}}{x_p !} \ \leqslant \ \sum_{x\in \N^{\cP}} \prod_{p \in
  \mathcal P}\frac{u_p^{x_p}}{x_p !} = \ \exp (\sum_p u_p) < +\infty \:.
\]
So we are always
in the ergodic case. For constant rates \eref{eq-constant}, if the
state space $\cR(M_0)$ is infinite, the
ergodicity depends on the values of the constants $\kappa_t$. 






\medskip

Theorem \ref{thm:Kelly} is the core result. Below, all the
developments consist in determining conditions under which Theorem
\ref{thm:Kelly} applies. More precisely, we want 
conditions on the model ensuring the existence
of a strictly positive solution to the NLTE and the finiteness of the measure
$\pi$. The ideal situation is as follows:
\begin{itemize}
\item {\em structural} properties of the Petri net (i.e. independent of the
firing rates) ensure the existence of a strictly positive solution to the NLTE; 
\item conditions on the firing rates ensure the finiteness of the
  measure $\pi$. 
\end{itemize}

\subsection{Linear traffic equations and Haddad $\&$ al's Theorem}

Solving the non-linear traffic equations is still a challenging task.
We may avoid a direct attack to these equations by considering a 
simpler system of equations called the linear traffic equations.

\begin{mydef}[Linear traffic equations]\label{def:LTE}
We call {\em linear traffic equations (LTE)} the equations over the
unknowns $(y_C)_{C\in \cC}$ defined by: $\forall C \in \cC$, 
\begin{equation}
y_C \sum_{t: I(t) = C} \kappa_t  = \sum_{t: O(t) = C} \kappa_{t}
y_{I(t)}\,.
\label{eq:LTE}
\end{equation}
Furthermore, if $\emptyset \in \cC$, then $y_{\emptyset} = 1$. 
\end{mydef}

The NLTE and the LTE are clearly linked. 

\begin{mylem}\label{le:nlte-lte}
If the NLTE (\ref{eq:NLTE}) have a strictly positive solution
$u=(u_p)_{p\in \cP}$, then $v=(v_C)_{C\in \cC}$, 
\begin{equation}\label{eq-le}
v_C = \prod_{p: C_p \neq 0} u_p^{C_p} \:,
\end{equation}
is a strictly positive solution to the LTE (\ref{eq:LTE}).
\end{mylem}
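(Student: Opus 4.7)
The plan is essentially a one-line substitution check, disguised by notation. The key observation is that the LTE in \eref{eq:LTE} is the NLTE in \eref{eq:NLTE} with every monomial $\prod_{p:C_p\neq 0} x_p^{C_p}$ formally replaced by a single unknown $y_C$. So given the strictly positive solution $u=(u_p)_{p\in\cP}$ of the NLTE, I would simply evaluate the candidate $v_C=\prod_{p:C_p\neq 0} u_p^{C_p}$ on both sides of the LTE and read off the identity from \eref{eq:NLTE} at $x=u$.

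Concretely, I would fix an arbitrary complex $C\in\cC$ and compute. The left-hand side of \eref{eq:LTE} at $y=v$ becomes
\[
v_C \sum_{t:I(t)=C}\kappa_t \;=\; \Bigl(\prod_{p:C_p\neq 0} u_p^{C_p}\Bigr)\sum_{t:I(t)=C}\kappa_t,
\]
which is exactly the left-hand side of the NLTE for the complex $C$ evaluated at $u$. For the right-hand side, each summand is indexed by a transition $t$ with $O(t)=C$, and using the definition of $v$ at the complex $I(t)$ gives
\[
\sum_{t:O(t)=C}\kappa_t\, v_{I(t)} \;=\; \sum_{t:O(t)=C}\kappa_t \prod_{p:I(t)_p\neq 0} u_p^{I(t)_p},
\]
which is the right-hand side of the NLTE at $u$. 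Since $u$ solves \eref{eq:NLTE}, the two are equal, so $v$ satisfies \eref{eq:LTE}.

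For strict positivity, each $v_C$ is a finite product of strictly positive reals $u_p^{C_p}$ and hence strictly positive. The boundary condition is cost-free: if $\emptyset\in\cC$, the defining product in \eref{eq-le} is empty, so $v_\emptyset=1$ by the empty-product convention already used in Definition \ref{def:NLTE}.

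I do not anticipate any real obstacle here; the only subtlety worth flagging explicitly is that the indexing conditions ``$C_p\neq 0$'' and ``$I(t)_p\neq 0$'' in the two sets of equations match precisely, so no boundary cases (such as a complex $C$ that does not appear as any $I(t)$, giving an empty sum) cause trouble: whenever one side of \eref{eq:LTE} is an empty sum, the corresponding side of \eref{eq:NLTE} at $x=u$ is also an empty sum, so the equality is preserved.
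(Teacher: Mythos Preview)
Your proof is correct. The paper actually omits the proof of this lemma entirely, treating it as immediate from the definitions; your direct substitution argument is exactly the trivial verification the paper leaves to the reader, including the handling of the $\emptyset$ complex via the empty-product convention.
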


For a partial converse statement, see Lemma \ref{le:LTE-NLTE}. 
The following proposition provides a simple and structural criterium for the existence of a
strictly positive solution to the LTE. 

\begin{myprop}\label{prop:wrandLTE}
The following statements are equivalent:
\begin{itemize}
\item $\exists (\kappa_t)_{t \in \cT}$ such that the equations
(\ref{eq:LTE}) have a strictly positive solution.
\item $\forall (\kappa_t)_{t \in \cT}$, the equations
(\ref{eq:LTE}) have a strictly positive solution.
\item The Petri net is weakly reversible.
\end{itemize}

\end{myprop}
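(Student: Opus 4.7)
The plan is to observe that the LTE is precisely the invariant-measure equation $y B = 0$ for the continuous-time Markov chain on $\cC$ whose generator $B$ is defined by $B_{C,D} = \sum_{t:I(t)=C,\,O(t)=D} \kappa_t$ for $C \neq D$ and $B_{C,C} = -\sum_{t:I(t)=C} \kappa_t$. The underlying directed graph of this CTMC is exactly the reaction graph, so weak reversibility of the Petri net is exactly the statement that every connected component of the chain is strongly connected. I would then prove the three directions (3)$\Rightarrow$(2)$\Rightarrow$(1)$\Rightarrow$(3); the middle implication is trivial.

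For (3)$\Rightarrow$(2), I would fix arbitrary rates $(\kappa_t)_{t \in \cT}$ and work one connected component at a time. Weak reversibility says each connected component $K$ of the reaction graph is strongly connected, so the restriction of $B$ to $K$ is the generator of an irreducible CTMC on a finite state space. Classical Markov chain theory then supplies a strictly positive invariant measure $y^K$ on $K$, unique up to a positive scalar. Concatenating these across components gives a strictly positive solution of the LTE, and the scalar freedom on the component containing $\emptyset$ (when $\emptyset \in \cC$) lets us impose $y_\emptyset = 1$.

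For (1)$\Rightarrow$(3), I would argue by contradiction. Suppose $y > 0$ solves the LTE for some rates, and let $T \subseteq \cC$ be the set of complexes not lying in any terminal strongly connected component of the reaction graph (i.e.\ the transient states of the associated CTMC). Assume $T \neq \emptyset$. A terminal SCC cannot have outgoing arcs to the rest, so every transition $t$ with $O(t) \in T$ satisfies $I(t) \in T$. Summing the LTE \eref{eq:LTE} over $C \in T$, the left-hand side equals $\sum_{t:\, I(t) \in T} \kappa_t y_{I(t)}$ and the right-hand side equals $\sum_{t:\, O(t) \in T} \kappa_t y_{I(t)}$, so the difference is
\[
\sum_{t:\, I(t) \in T,\, O(t) \notin T} \kappa_t\, y_{I(t)} \;=\; 0.
\]
Since every transient complex reaches a recurrent one by finiteness, there exists at least one transition $t$ with $I(t) \in T$ and $O(t) \notin T$, and for it $\kappa_t y_{I(t)} > 0$, a contradiction. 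Hence $T = \emptyset$, so every node of each connected component lies in a terminal SCC; a short undirected-path argument within a connected component then shows it can contain only one terminal SCC (any arc joining two would violate terminality of one of them), so each component is a single strongly connected set, proving weak reversibility.

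The main obstacle is the (1)$\Rightarrow$(3) direction, specifically the last step that rules out a connected component containing two distinct terminal SCCs once the transient set is shown to be empty; the rest reduces to standard invariant-measure theory for finite CTMCs.
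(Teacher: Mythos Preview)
Your proposal is correct and follows essentially the same approach as the paper: you identify the LTE as the balance equations $y\widetilde{Q}=0$ of the continuous-time Markov chain on $\cC$ whose generator has off-diagonal entries $\widetilde{q}_{u,v}=\sum_{t:I(t)=u,\,O(t)=v}\kappa_t$, and then appeal to the standard structure theory of finite Markov chains. The paper compresses everything after this identification into the phrase ``standard Perron--Frobenius theory''; your write-up simply spells out those details (existence of a strictly positive stationary vector on each irreducible block for (3)$\Rightarrow$(2), and the transient-set summation argument for (1)$\Rightarrow$(3)), so there is no genuine divergence in method.
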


Proofs can be found in \cite[Theorem 3.5]{BoSe} or \cite[Corollary
  4.2]{feinberg79}. 
We recall the
argument from \cite{BoSe} which is simple and illuminating. 

 \begin{proof}
 The {\em reaction process} is a continuous-time Markov process, analog to the
 marking process, except that it is built on the reaction
 graph instead of the marking graph. More precisely, the state space is
 the set of complexes $\cC$ and the infinitesimal generator $\widetilde{Q}=
 (\widetilde{q}_{u,v})_{u,v}$
 is defined by:
 \[
 \forall u \neq v, \quad \widetilde{q}_{u,v} = 
 \sum_{t: I(t)=u, O(t)=v} \kappa_t \:. 
 \]
 (The discrete-time version of this process was introduced in
 \cite{HLTa} under the name ``routing process''.)
 The key observation is that the LTE \eref{eq:LTE} are precisely the balance equations $y\widetilde{Q}=0$ of the reaction
 process. The result now follows using standard Perron-Frobenius theory. 
 \end{proof}

So weak reversibility is a necessary condition to have a strictly positive
solution to the NLTE, and to be able to apply
Theorem \ref{thm:Kelly}. Unfortunately, it is not a sufficient
condition as shown by the following example.

\begin{myeg}
Let us consider a Markovian Petri net whose underlying Petri net is
shown in Figure \ref{fig:LTEnotNLTE}, and is 
equivalently defined by the chemical reactions \eref{eq-chem}. 
This is a weakly reversible Petri net, thus its LTE always
have a
strictly positive solution regardless of the choice of the constants $\kappa_t$. The NLTE are:
\begin{equation}\label{eq:LTEnotNLTE}
\kappa_1 x_1  =  \kappa_2 x_2 \qquad \kappa_3 x_3  =   \kappa_4 x_4
\qquad \kappa_5 x_1 x_3  = \kappa_6 x_2 x_4 \:.
\end{equation}
The system (\ref{eq:LTEnotNLTE}) does not always have a strictly positive solution. For
example, set $\kappa_1 = \kappa_2 = \kappa_3 = \kappa_4 = \kappa_5 =
1$, and $\kappa_6 = 2$. Any solution to (\ref{eq:LTEnotNLTE}) must
satisfy either $x_1=x_2=0$ or $x_3=x_4=0$. 

Depending on the values of the constants $(\kappa_t)_t$, the Markovian
Petri net may
or may not have a product form invariant measure. Anticipating on
Theorem \ref{thm:defzero}, the deficiency of the Petri net has to be
different from 0 for this property to hold. 
And, indeed, we have $\mbox{rank}(A)=3$ and $\mbox{rank}(N)=2$ so the
deficiency is equal to 1. 

\end{myeg}

So now the goal is to find additional conditions on top of weak
reversibility to ensure the existence of a product form. 

\medskip

An early result in this direction appears in
Coleman, Henderson and Taylor~\cite[Theorem 3.1]{CHTa96}. The
condition is not structural (i.e. 
rate dependent) and not very tractable. The next result, due to 
Haddad, Moreaux, Sereno, and Silva~\cite[Theorem 9]{HMSS}, provides a
structural sufficient condition. 

\begin{myprop}\label{pr:HMSS}
Consider a Markovian Petri
net (set of complexes $\cC$). Assume
that the Petri net is weakly reversible. 
Let $N$ be the incidence
matrix of the Petri net, see \eref{eq-def-incidence}. Let $A$ be the node-arc incidence matrix of
the reaction graph, see \eref{eq-A}. 
Assume that there exists a $\Q$-valued $(\cC\times \cP)$-matrix $B$
such that $BN=A$. Then the marking process has an invariant measure
$\pi$ given by: $\forall x \in \cR(M_0)$,
\[
\pi(x) = \Phi(x)^{-1} \prod_{p\in \cP} \bigl( \prod_{C\in \cC}
v_C^{B_{C,p}} \bigr)^{x_p} \:,
\]
where $v$ is a strictly positive solution to the LTE. 
\end{myprop}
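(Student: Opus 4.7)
The plan is to reduce the statement to Theorem \ref{thm:Kelly}: from the strictly positive LTE solution $v$ and the matrix $B$, I will manufacture a strictly positive solution $u \in (\R_+^*)^\cP$ of the NLTE \eqref{eq:NLTE}, and then Kelly's theorem will produce the claimed invariant measure $\pi(x) = \Phi(x)^{-1} \prod_p u_p^{x_p}$. The existence of the LTE solution $v$ is guaranteed by Proposition \ref{prop:wrandLTE}, since the net is assumed weakly reversible. The natural candidate, dictated by the form of $\pi$ in the statement, is
\[
u_p \;=\; \prod_{C\in\cC} v_C^{B_{C,p}}\,,
\]
which is a well-defined positive real because each $v_C>0$ and $B_{C,p}\in\Q$.

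The key verification is that this $u$ satisfies the NLTE. I would first establish the bookkeeping identity: for every $c\in \N^\cP$,
\[
\prod_{p} u_p^{c_p} \;=\; \prod_{p}\prod_{C\in\cC} v_C^{B_{C,p} c_p} \;=\; \prod_{C\in\cC} v_C^{(Bc)_C}\,,
\]
where $c$ is viewed as a column vector over $\cP$ and $Bc$ as a vector over $\cC$. Applying this with $c = I(t)$ and $c = O(t)$, and using the hypothesis $BN = A$, which reads $B\cdot O(t) - B\cdot I(t) = e_{O(t)} - e_{I(t)}$ at each transition $t$, I get
\[
B\cdot I(t) \;=\; B\cdot O(t) \;-\; e_{O(t)} \;+\; e_{I(t)}\,.
\]
Hence, for any transition $t$ with $O(t)=C$,
\[
\prod_{p} u_p^{I(t)_p} \;=\; \Bigl(\prod_{p} u_p^{C_p}\Bigr)\,\frac{v_{I(t)}}{v_C}\,.
\]

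Plugging this into the right-hand side of the NLTE \eqref{eq:NLTE} at complex $C$ and pulling the common factor $\prod_p u_p^{C_p}$ out on both sides (using $u_p^0=1$ to turn $\prod_{p:C_p\neq 0}$ into $\prod_p$), the NLTE at $C$ becomes
\[
v_C \sum_{t:I(t)=C}\kappa_t \;=\; \sum_{t:O(t)=C}\kappa_t \, v_{I(t)}\,,
\]
which is exactly the LTE \eqref{eq:LTE} at $C$, satisfied by hypothesis. Thus $u$ is a strictly positive solution of the NLTE, and Theorem \ref{thm:Kelly} yields the stated invariant measure.

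I do not expect a serious obstacle in this argument: it is essentially a direct algebraic computation. The only delicate points are making sense of rational powers of positive reals (immediate, since the $v_C$'s are strictly positive) and clean index bookkeeping between $\cP$ and $\cC$. The substantive content, namely that the change of variables $u_p = \prod_C v_C^{B_{C,p}}$ transforms NLTE into LTE, is forced by the identity $BN = A$, and everything else is a mechanical check followed by an invocation of Kelly's theorem.
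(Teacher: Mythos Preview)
Your proposal is correct and follows essentially the same route as the paper. The paper does not give a self-contained proof immediately after Proposition~\ref{pr:HMSS} (it is attributed to \cite{HMSS}), but in Section~\ref{sse-addi} it proves Lemma~\ref{le:LTE-NLTE}, which is exactly your key step: defining $u_p=\prod_C v_C^{B_{C,p}}$ and checking via $BN=A$ that $\prod_p u_p^{I(t)_p-O(t)_p}=v_{I(t)}/v_{O(t)}$, so that the NLTE for $u$ reduce to the LTE for $v$; Kelly's Theorem~\ref{thm:Kelly} then gives the invariant measure.
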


\subsection{Deficiency zero and product form}\label{sse-def0}

Independently of the efforts in the Petri net community
(\cite{CHTa96,HMSS}), the following result was proved on the
chemical side by Feinberg~\cite[Theorem 5.1]{feinberg79}. 

\begin{mythm}[Feinberg]\label{thm:Feinberg}
Consider a Markovian Petri net. Assume that the Petri net has deficiency 0. 
Then the NLTE have a strictly positive solution if and only if the network is weakly reversible.
\end{mythm}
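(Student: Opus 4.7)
I would prove the two implications separately. The direction NLTE~$\Rightarrow$~WR needs no deficiency hypothesis: if $u>0$ solves the NLTE, Lemma~\ref{le:nlte-lte} produces a positive solution $v_C=\prod_p u_p^{C_p}$ of the LTE, and Proposition~\ref{prop:wrandLTE} then yields weak reversibility.

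For the converse, weak reversibility and Proposition~\ref{prop:wrandLTE} give a positive $v$ solving the LTE. It suffices to exhibit $u\in\R_{>0}^{\cP}$ with $v_C=\prod_p u_p^{C_p}$ for every $C\in\cC$, since substituting this identity into the LTE gives precisely the NLTE for~$u$. Let $Y$ be the $(\cP\times\cC)$-matrix with $Y_{p,C}=C_p$; comparing \eref{eq-def-incidence} with \eref{eq-A} shows $N=YA$. In componentwise logarithms, $v_C=\prod_p u_p^{C_p}$ becomes $\log v = Y^T\log u$, so the goal is $\log v \in \mbox{image}(Y^T) = (\ker Y)^\perp$.

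Two ingredients then combine. First, the LTE couples only complexes sitting in the same connected component of the reaction graph, so any positive LTE solution remains one after being multiplied by an arbitrary positive constant on each component; on the log side this is the addition of an arbitrary element of $\ker(A^T)$ (the $\ell$-dimensional space of vectors constant on each component of the reaction graph). Second, the deficiency, via $N=YA$, reads $\mbox{rank}(N)=\mbox{rank}(A)-\dim(\ker Y \cap \mbox{image}(A))$, so $\delta=0$ is equivalent to $\ker Y \cap \mbox{image}(A)=\{0\}$, whose orthogonal complement in $\R^{\cC}$ is $\mbox{image}(Y^T)+\ker(A^T)=\R^{\cC}$. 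Therefore $\log v$ decomposes as $Y^T w+\lambda$ with $w\in\R^{\cP}$ and $\lambda\in\ker(A^T)$; setting $u_p:=e^{w_p}$ and $v':=v\,e^{-\lambda}$ componentwise yields a positive LTE solution $v'$ with $v'_C=\prod_p u_p^{C_p}$, hence $u$ solves the NLTE.

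The main obstacle to anticipate is pairing up the two gauges correctly: the component-wise rescaling freedom of $v$ must be identified with $\ker(A^T)$, and the primal form $\ker Y \cap \mbox{image}(A)=\{0\}$ of deficiency zero must be recognised as equivalent, by orthogonal complementation in $\R^{\cC}$, to $\mbox{image}(Y^T)+\ker(A^T)=\R^{\cC}$. Once these linear-algebraic identifications are in place, the construction of $u$ from the log-decomposition of $v$ is formal.
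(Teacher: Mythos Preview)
The paper does not supply its own proof of Theorem~\ref{thm:Feinberg}; it is cited from Feinberg's lecture notes. Your argument is correct and is essentially the standard one: the factorisation $N=YA$, the identification $\delta=\dim(\ker Y\cap\mbox{image}(A))$, and the orthogonal decomposition $\R^{\cC}=\mbox{image}(Y^T)+\ker(A^T)$ are exactly the linear-algebraic core of Feinberg's proof.

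It is worth noting that the paper independently assembles an alternative route to the hard direction (WR + $\delta=0\Rightarrow$ NLTE), though it does not present it as such. Proposition~\ref{pr:HMSS2} shows that $\delta=0$ is equivalent to the existence of a matrix $B$ with $BN=A$, and Lemma~\ref{le:LTE-NLTE} then builds the NLTE solution as $u_p=\prod_C v_C^{B_{C,p}}$ from a positive LTE solution $v$ (supplied by weak reversibility via Proposition~\ref{prop:wrandLTE}). Compared with your argument, this bypasses the explicit factorisation $N=YA$ and the orthogonality reasoning in $\R^{\cC}$, trading them for the construction of the auxiliary matrix $B$; your approach is more transparent about \emph{why} deficiency zero is the right hypothesis, while the paper's route has the advantage of tying the result directly to the Haddad et al.\ criterion.
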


By combining Theorems \ref{thm:Kelly} and \ref{thm:Feinberg}, we
obtain the following result whose formulation is original. 

\begin{mythm}\label{thm:defzero}
Consider a Petri net which is weakly 
reversible and has deficiency 0.
Consider any associated Markovian Petri net. The NLTE
have a strictly positive solution $(u_p)_p$ and the marking
process has a product form invariant measure:
\begin{equation}\label{eq:thmdefzero}
\pi(x) = \Phi(x)^{-1} \prod_{p \in \mathcal P} u_p^{x_p}\,.
\end{equation}
If we assume furthermore that the rates are of mass-action type
\eref{eq-massaction}, then the marking process is ergodic and its
stationary distribution is:
\[
\pi(x) = C \prod_{p \in \mathcal P} \frac{u_p^{x_p}}{x_p !} \:,
\]
where $C =\bigl( \sum_{x} u_p^{x_p}/x_p ! \bigr)^{-1}$.
\end{mythm}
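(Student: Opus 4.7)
The plan is to assemble the statement directly from the two substantive results already at our disposal, namely Theorem \ref{thm:Feinberg} (Feinberg) and Theorem \ref{thm:Kelly} (Kelly), and then handle the mass-action specialization by a short normalization argument.

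First, since by hypothesis the Petri net is weakly reversible and has deficiency $0$, Feinberg's Theorem \ref{thm:Feinberg} applies and yields the existence of a strictly positive solution $(u_p)_{p\in\cP}$ to the non-linear traffic equations \eref{eq:NLTE}. This is exactly the hypothesis needed to invoke Kelly's Theorem \ref{thm:Kelly}, which then produces the invariant measure
\[
\pi(x) = \Phi(x)^{-1} \prod_{p\in\cP} u_p^{x_p}, \qquad x\in\cR(M_0),
\]
giving the first part of the conclusion.

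For the mass-action specialization, I would substitute $\Phi(x)=\prod_p x_p!$ (as recorded just after \eref{eq-massaction}) into the expression for $\pi$, producing $\pi(x)= \prod_p u_p^{x_p}/x_p!$. Finiteness of the total mass is then the computation already written out in the discussion following Theorem \ref{thm:Kelly}:
\[
\sum_{x\in\cR(M_0)} \prod_{p\in\cP}\frac{u_p^{x_p}}{x_p!} \ \leqslant\ \sum_{x\in\N^{\cP}} \prod_{p\in\cP}\frac{u_p^{x_p}}{x_p!} \ =\ \exp\Bigl(\sum_{p\in\cP} u_p\Bigr) < +\infty.
\]
Setting $C$ to be the reciprocal of this sum over $\cR(M_0)$ yields a bona fide stationary distribution of the stated product form.

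Finally, ergodicity follows because $\pi(x)>0$ for every $x\in\cR(M_0)$ (since the $u_p$ are strictly positive and $\Phi$ is $\R_+^*$-valued), which by Kelly's theorem already forces the marking process to be irreducible on $\cR(M_0)$; together with the finite invariant mass just obtained, this gives ergodicity. The only mildly delicate bookkeeping step is the transition from \emph{invariant measure with full support} to \emph{irreducibility}, but this is precisely the remark made immediately after Theorem \ref{thm:Kelly}, so no further obstacle arises. Since both main ingredients are already proved earlier in the paper, there is no real technical hurdle here: the proof is essentially a chaining together of Theorems \ref{thm:Feinberg} and \ref{thm:Kelly} plus the explicit normalization in the mass-action case.
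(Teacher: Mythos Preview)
Your proposal is correct and matches the paper's own treatment: the paper explicitly introduces Theorem \ref{thm:defzero} with the sentence ``By combining Theorems \ref{thm:Kelly} and \ref{thm:Feinberg}, we obtain the following result,'' and the mass-action ergodicity is handled by the same normalization computation you cite from the discussion following Theorem \ref{thm:Kelly}. There is no additional argument in the paper beyond this chaining, so your write-up is essentially identical in content and structure.
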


The above result is interesting. Indeed, the ``deficiency 0'' condition is
structural and very simple to handle. We now prove that the result in
Theorem \ref{thm:defzero} is
equivalent to the one in Proposition \ref{pr:HMSS}. 

\begin{myprop}\label{pr:HMSS2}
Consider a Petri net. There exists a $(\cC\times
\cP)$-matrix $B$
such that $BN=A$ (with the notations of Prop. \ref{pr:HMSS}) if and
only if the 
Petri net has deficiency 0. 
\end{myprop}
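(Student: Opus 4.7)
The plan is to recast the existence of $B$ as a kernel-inclusion condition and then match it against the characterization of deficiency zero furnished by Proposition \ref{pr-nonneg}. Viewing $N$ and $A$ as linear maps $\R^{\cT}\to\R^{\cP}$ and $\R^{\cT}\to\R^{\cC}$ respectively, the elementary linear-algebra fact I will use is that, given matrices $N,A$ with the same number of columns, a matrix $B$ with $BN=A$ exists if and only if $\ker(N)\subseteq \ker(A)$ (equivalently, each row of $A$ lies in the row span of $N$). This reduction is the whole game.

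Having made that observation, the proof is nearly automatic. Proposition \ref{pr-nonneg} already supplies the inclusion $\ker(A)\subseteq\ker(N)$. By rank--nullity, deficiency zero (i.e.\ $\mbox{rank}(A)=\mbox{rank}(N)$) is equivalent to $\dim\ker(A)=\dim\ker(N)$, which, combined with the automatic inclusion, is equivalent to $\ker(A)=\ker(N)$, and in particular to $\ker(N)\subseteq\ker(A)$. Conversely, given $\ker(N)\subseteq\ker(A)$, combined again with the always-valid reverse inclusion from Proposition \ref{pr-nonneg}, one obtains equality of kernels and hence $\delta=0$. So deficiency zero is exactly equivalent to the row-span condition characterizing the existence of $B$.

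For the two directions of the kernel-inclusion lemma itself, I would argue as follows. If $B$ exists with $BN=A$, then for any $x\in\R^{\cT}$ we have $Nx=0\Rightarrow Ax=BNx=0$, giving $\ker(N)\subseteq\ker(A)$. Conversely, assuming the inclusion, define a linear map $\tilde B$ on $\mbox{im}(N)\subseteq\R^{\cP}$ by $\tilde B(Nx)=Ax$; the hypothesis makes this well defined, and $\tilde B$ can be extended (arbitrarily, e.g.\ by zero on a complementary subspace) to a linear map $\R^{\cP}\to\R^{\cC}$, whose matrix $B$ satisfies $BN=A$.

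The only remaining point is the rationality of the entries, demanded by the statement. Since $N$ and $A$ are integer matrices, the system $BN=A$ in the entries of $B$ is a linear system with rational coefficients; if it is solvable over $\R$, Gaussian elimination over $\Q$ produces a solution over $\Q$. I do not anticipate any real obstacle here: the entire argument is a clean linear-algebra translation, with the substance of the result already encoded in the non-negativity of the deficiency (Proposition \ref{pr-nonneg}). The one place requiring a moment's care is distinguishing the two kernel inclusions $\ker(A)\subseteq\ker(N)$ (always true) and $\ker(N)\subseteq\ker(A)$ (equivalent to $\delta=0$), making sure that they are used in the correct direction in each step.
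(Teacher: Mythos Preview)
Your proof is correct and rests on the same ingredients as the paper's: the automatic inclusion $\ker(A)\subseteq\ker(N)$ from Proposition~\ref{pr-nonneg}, and a rank/kernel-dimension count. The packaging differs slightly. You invoke the standard factorization criterion ``$B$ with $BN=A$ exists iff $\ker(N)\subseteq\ker(A)$'' as a black box, then combine it with $\ker(A)\subseteq\ker(N)$ and rank--nullity to get the equivalence directly. The paper instead constructs $B$ explicitly: it right-multiplies by an invertible $Q$ so that the last $|\cT|-r$ columns of $AQ$ vanish, observes via \eqref{eq:eliminationAN2} that the same columns of $NQ$ vanish, and then solves for $B$ on the remaining $r$ independent columns. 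This is just a concrete instance of your ``define $\tilde B$ on $\mbox{im}(N)$ and extend'' step, so the two arguments are the same in substance; yours is a cleaner abstract formulation, the paper's is more hands-on.
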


\begin{proof}
The deficiency of the Petri net is 0 iff $\mbox{rank}(N) =
\mbox{rank}(A)$. 

Assume first that there exists a matrix $B$ such that $BN=A$.
Since $BN=A$, we have 
$\mbox{rank}(A) \leqslant \mbox{rank}(N)$. By Proposition
\ref{pr-nonneg}, we also have $\mbox{rank}(A) \geqslant
\mbox{rank}(N)$. Therefore the Petri net has deficiency 0.

\medskip

We now prove the converse result. Assume that the Petri net has
deficiency 0. 
Set $r = |\cC| - \ell$. We have
$\mbox{rank}(A) = \mbox{rank}(N) = r$.
Since $\mbox{rank}(A) = r$, we know that there exists an invertible and $\Q$-valued $(\cT
\times \cT)$-matrix $Q$ such that the first $r$ column vectors of $AQ$ are
linearly independant and the last $(|\cT| - r)$ column vectors are
$(0,\dots, 0)^T$.
According to (\ref{eq:eliminationAN2}), the last $(|\cT| - r)$ column
vectors of $NQ$ are $(0,\dots, 0)^T$. But $\mbox{rank}(N) = r$, so the first $r$
column vectors of $NQ$ must be linearly independant.
Denote by $AQ_1, \dots, AQ_r$, {\em resp.} $NQ_1, \dots, NQ_r$,  the first $r$ column
vectors of $AQ$, {\em resp.} $NQ$. Since the two families are
independent, we know that there exists 
a $\Q$-valued $(\cC \times \cP)$-matrix $B$ such that $BNQ_i =
AQ_i,$ for all $i = 1, \dots, r$. In other words,
\begin{equation}\label{eq:BNQ}
BNQ = AQ\,.
\end{equation}
Finally, right-multiplying both sides of \eref{eq:BNQ} by $Q^{-1}$, we
obtain $BN = A$.
\end{proof}

\paragraph{Roadmaps.}

The main results are summarized in the diagram of Figure
\ref{fi-recap}. In the diagram, (N)LTE is a shorthand for: ``the
(N)LTE have a strictly positive solution''. 

\begin{figure}[h]
\[ \epsfxsize=300pt \epsfbox{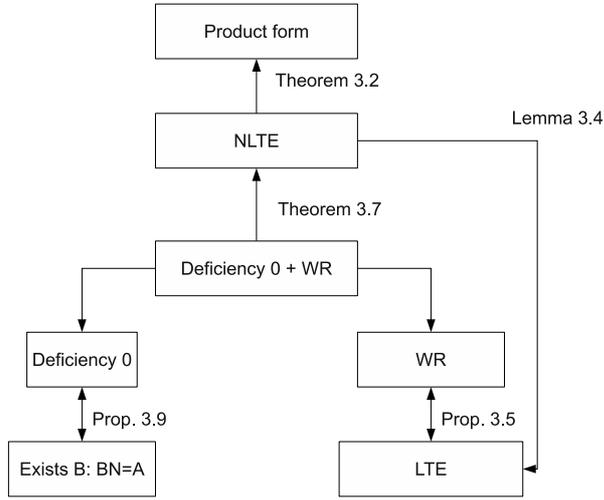} \]
\caption{Roadmap of the results}
\label{fi-recap}
\end{figure}

The implications which do not appear on Figure
\ref{fi-recap} are false. The counter-examples are summarized on 
 Figure \ref{fi-recap2}. (Using Lemma \ref{le:nlte-lte} and
 Prop. \ref{prop:wrandLTE}, any Petri net 
 of deficiency 0 and not WR is a counter-example to [``deficiency 0'' $\implies$
   ``NLTE for some rates''].)

\begin{figure}[H]
\[ \epsfxsize=300pt \epsfbox{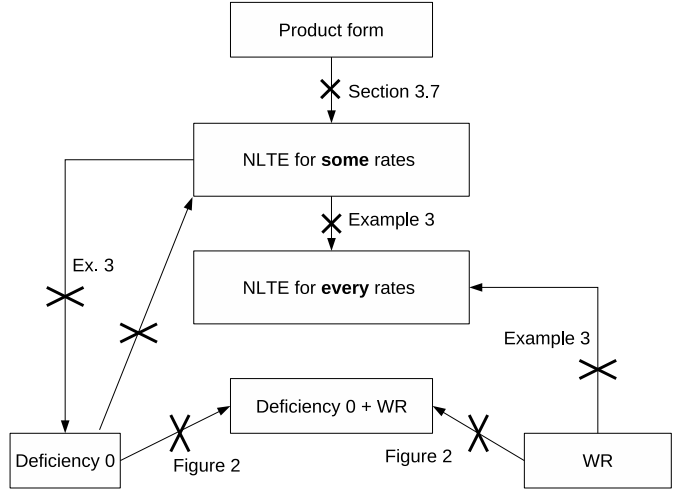} \]
\caption{Roadmap of the counter-examples}
\label{fi-recap2}
\end{figure}

The
product form property depends  on the initial marking via the
reachability set, while the other properties appearing in Figure
\ref{fi-recap} do not depend on it
by definition. This provides a striking picture. 
In fact, for Petri nets which are weakly reversible and have
deficiency zero, when the initial marking varies, the product form
remains the ``same'' but is defined on different state spaces. 
We illustrate this point 
in \S \ref{sse-haddad}, and we come back to it in \S \ref{sse-cex}.

\subsection{A detailed example}\label{sse-haddad}

Consider the Petri graph represented on the left of Figure \ref{fi-haddad}. 
The corresponding reaction graph is given on the right of the figure.

The reaction graph is strongly connected so the
Petri graph is weakly reversible. 
The incidence matrices $N$ and $A$ are given by (indices are ranked as
$(p,q,r)$, $(t_1,t_2,t_3)$, and $(2p,p+q+r,2q)$):
\begin{equation*}
N = \left(
\begin{array}{ccc}
-1 & 2 & -1  \\
1 & -2 & 1  \\
1 & 0 & -1  
\end{array}
\right)\,, \qquad 
A= \left(
\begin{array}{ccc}
-1 & 1 & 0  \\
1 & 0 & -1  \\
0 & -1 & 1 
\end{array}
\right)\:.
\end{equation*}

We check that $\mbox{rank}(A)=\mbox{rank}(N)=2$, so the deficiency is
0. We are in the scope of application of the results of Section
\ref{sse-def0}. 

Denote a marking $M$ by the triple $(M_p,M_q,M_r)$. 
Consider the two Petri nets corresponding to the above Petri graph
with two different initial markings: $(2,0,0)$ and
$(3,0,0)$. 

\begin{figure}[H]
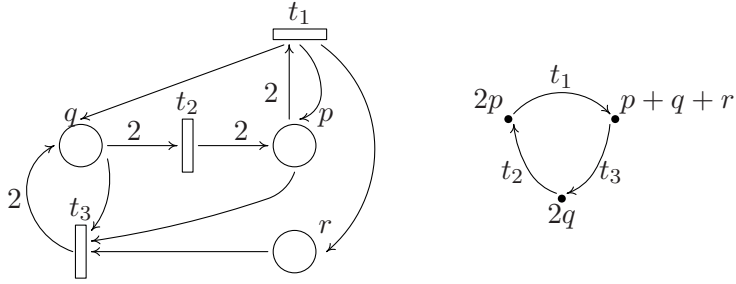

\centering
\begin{mfpic}{-90}{90}{-90}{90}
\circle{(40, 0), 8}
\circle{(-40, 0), 8}
\circle{(40, -40), 8}

\rect{(-2, 10), (2, -10)}
\rect{(-42, -30), (-38, -50)}
\rect{(32, 44), (52, 40)}

\arrow\polyline{(-30, 0), (-4, 0)}
\arrow\polyline{(4, 0), (30, 0)}
\arrow\polyline{(38, 10), (38, 38)}
\arrow\polyline{(30, -40), (-36, -40)}

\arrow\curve{(-43, -40), (-60, -20), (-50, 0)}
\arrow\curve{(-30, -5), (-30, -20), (-36, -32)}
\arrow\curve{(40, -10), (30, -20), (-36, -36)}
\arrow\polyline{(36, 38), (-40, 10)}
\arrow\curve{(42, 38), (50, 20), (42, 10)}
\arrow\curve{(50, 38), (70, 0), (52, -40)}

\tlabelsep{2pt}
\tlabel[bl](47, 5){$p$}
\tlabel[br](-39, 5){$q$}
\tlabel[bl](47, -35){$r$}

\tlabel[bc](0, 10){$t_2$}
\tlabel[bc](-40, -30){$t_3$}
\tlabel[bc](42, 44){$t_1$}

\tlabel[bc](-20, 0){$2$}
\tlabel[bc](20, 0){$2$}
\tlabel[cr](36, 20){$2$}
\tlabel[cr](-60, -20){$2$}

\point[3pt]{(120, 10), (160, 10), (140, -20)}
\tlabel[br](120, 10){$2p$}
\tlabel[bl](160, 10){$p+q+r$}
\tlabel[tc](140, -20){$2q$}

\arrow\curve{(122, 12), (140, 20), (158, 12)}
\arrow\curve{(158, 8), (152, -10), (142, -18)}
\arrow\curve{(138, -18), (128, -10), (122, 8)}

\tlabel[bc](140, 20){$t_1$}
\tlabel[cr](128, -10){$t_2$}
\tlabel[cl](152, -10){$t_3$}

\end{mfpic}
\caption{A Petri graph and its reaction graph.}\label{fi-haddad}
\end{figure}

The two Petri nets have a drastically different behaviour. The first
one is live and bounded, while the second one if live and unbounded. 
The sets of reachable markings are, respectively,
\begin{eqnarray*}
\cR(2,0,0) & = &   \bigl\{ (2,0,0), (1,1,1),(0,2,0) \bigr\} \\
\cR(3,0,0) & = &   \bigl\{ (1,2,0), (0,3,1) \bigr\} \cup \bigl\{
(i,3-i, 2n+1-i), \ 0\leq i \leq 3 , n\geq 1  \bigr\} 
\end{eqnarray*}

For the initial marking $(2,0,0)$, the marking graph is the elementary
circuit $(2,0,0)\longrightarrow (1,1,1) \longrightarrow (0,2,0)
\longrightarrow (2,0,0)$. 
For the initial marking $(3,0,0)$,
the marking graph is represented in Figure
\ref{fi-marking}. The dashed arrows correspond to transition $t_1$,
the dash-and-dotted ones to $t_2$, and the plain ones to $t_3$. 

\begin{figure}[H]
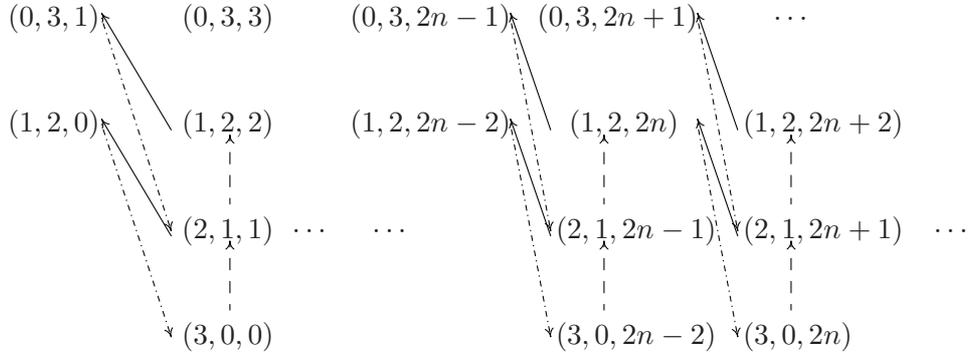

\centering
\begin{mfpic}{-300}{100}{-100}{100}
\tlabelsep{2pt}

\tlabel[cl](-285, 80){$(0, 3, 1)$}
\tlabel[cl](-285, 40){$(1, 2, 0)$}

\tlabel[cl](-220, 80){$(0, 3, 3)$}
\tlabel[cl](-220, 40){$(1, 2, 2)$}
\tlabel[cl](-220, 0){$(2, 1, 1)$}
\tlabel[cl](-220, -40){$(3, 0, 0)$}

\tlabel[cl](-157, 80){$(0, 3, 2n - 1)$}
\tlabel[cl](-157, 40){$(1, 2, 2n - 2)$}
\tlabel[cr](-130, 0){$\dots$}

\tlabel[cl](-87, 80){$(0, 3, 2n + 1)$}
\tlabel[cl](-75, 40){$(1, 2, 2n)$}
\tlabel[cl](-80, 0){$(2, 1, 2n - 1)$}
\tlabel[cl](-80, -40){$(3, 0, 2n - 2)$}

\tlabel[cr](20, 80){$\dots$}
\tlabel[cl](-10, 40){$(1, 2, 2n + 2)$}
\tlabel[cl](-10, 0){$(2, 1, 2n + 1)$}
\tlabel[cl](-10, -40){$(3, 0, 2n)$}

\dashpattern{ddashed}{0pt, 2pt, 2pt, 2pt}

\arrow\gendashed{ddashed}\polyline{(-248, 80), (-222, 0)}
\arrow\gendashed{ddashed}\polyline{(-248, 40), (-222, -40)}

\arrow\dashed\polyline{(-200, -30), (-200, -4)}
\arrow\dashed\polyline{(-200, 10), (-200, 36)}

\arrow\polyline{(-222, -2), (-248, 42)}
\arrow\polyline{(-222, 38), (-248, 82)}

\arrow\gendashed{ddashed}\polyline{(-95, 80), (-80, 0)}
\arrow\gendashed{ddashed}\polyline{(-95, 40), (-80, -40)}

\arrow\dashed\polyline{(-60, -30), (-60, -4)}
\arrow\dashed\polyline{(-60, 10), (-60, 36)}

\arrow\polyline{(-80, -2), (-95, 42)}
\arrow\polyline{(-80, 38), (-95, 82)}

\arrow\gendashed{ddashed}\polyline{(-25, 80), (-10, 0)}
\arrow\gendashed{ddashed}\polyline{(-25, 40), (-10, -40)}

\arrow\dashed\polyline{(10, -30), (10, -4)}
\arrow\dashed\polyline{(10, 10), (10, 36)}

\arrow\polyline{(-10, -2), (-25, 42)}
\arrow\polyline{(-10, 38), (-25, 82)}

\tlabel[cr](-160, 0){$\dots$}
\tlabel[cr](80, 0){$\dots$}

\end{mfpic}
\caption{Marking graph with the initial marking  $(3, 0, 0)$.}\label{fi-marking}
\end{figure}

Consider the Markovian Petri nets associated with the above Petri nets
and constant firing rates $(\kappa_1,\kappa_2,\kappa_3)$ for
$(t_1,t_2,t_3)$.  The NLTE over the unknowns $(x_p,x_q,x_r)$ are given by
\[
\kappa_1 x_p^2 = \kappa_2 x_q^2, \quad 
\kappa_3 x_px_qx_r = \kappa_1 x_p^2, \quad
\kappa_2 x_q^2 = \kappa_3 x_px_qx_r \:.
\]

A strictly positive solution to the NLTE is
$( \sqrt{\kappa_2}/\sqrt{\kappa_1} , \ 1 \ ,
\sqrt{\kappa_1\kappa_2}/\kappa_3  )$. 
Let $\cR$ denote the set of reachable markings. According to Theorem
\ref{thm:defzero}, the
invariant measure $\pi$ is given by 
\begin{equation}\label{eq-invameas}
\forall m=(m_p,m_q,m_r) \in \cR, \quad \pi(m) = \kappa_1^{(m_r-m_p)/2}
\kappa_2^{(m_p+m_r)/2}\kappa_3^{-m_r} \:.
\end{equation}
The invariant measure is expressed in  exactly the same way for the
two Petri nets. 
But it corresponds to two very different situations. 

\medskip

For the initial marking $(2,0,0)$, we have $|\cR|=3$, the model is
ergodic and the unique stationary distribution $p$, obtained by
normalization of \eref{eq-invameas}, is given by 
\[
p(2,0,0) = C^{-1}\kappa_2\kappa_3, \quad p(1,1,1)=
C^{-1}\kappa_1\kappa_2, \quad p(0,2,0) = C^{-1} \kappa_1\kappa_3\:,
\]
with $C= \kappa_1\kappa_2 + \kappa_2\kappa_3+ \kappa_1\kappa_3$. 

\medskip

For the initial marking $(3,0,0)$, we have $|\cR|=\infty$, and the
model is ergodic if and only if the following stability condition is satisfied
\[
\kappa_1\kappa_2 < \kappa_3^2 \:.
\]

It is interesting to observe that we get a non-linear stability
condition. 

\subsection{Additional results}\label{sse-addi}

The statements of Proposition \ref{pr:HMSS} and
Theorem \ref{thm:defzero} differ in that they rely respectively
on the LTE and the NLTE. In particular, it seems at first glance that
Prop. \ref{pr:HMSS} manages to bypass the NLTE. But it is not the
case: the NLTE are hidden in the matrix $B$, see below. 

\begin{mylem}\label{le:LTE-NLTE}
Consider a weakly reversible Markovian Petri net. Assume that there exists a $\Q$-valued matrix $B$
such that $BN=A$ (notations of Prop. \ref{pr:HMSS}). Let
$v=(v_C)_{C\in \cC}$ be a strictly positive solution to the LTE. 
Then $u = (u_p)_{p \in \cP}, u_p
= \prod_{C\in \cC} v_C^{B_{C,p}}$ is a strictly positive solution to the NLTE.
\end{mylem}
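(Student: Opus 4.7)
The plan is to introduce, for each complex $C$, the monomial
\[
w_C \;:=\; \prod_{p\in\cP} u_p^{C_p},
\]
and observe that plugging $u$ into the NLTE \eref{eq:NLTE} at complex $D$ turns the left-hand side into $w_D\sum_{t:I(t)=D}\kappa_t$ and the right-hand side into $\sum_{t:O(t)=D}\kappa_t\, w_{I(t)}$. Thus the NLTE for $u$ at $D$ has exactly the shape of the LTE at $D$, with $w$ in place of $v$. Rather than chasing the stronger equality $w=v$, it suffices to verify the ratio identity $w_{I(t)}/w_{O(t)} = v_{I(t)}/v_{O(t)}$ for every transition $t$: once this is known, substituting $w_{I(t)} = w_D\, v_{I(t)}/v_D$ (for $t$ with $O(t)=D$) into the LTE-shaped balance and dividing through by $w_D/v_D$ recovers exactly the LTE \eref{eq:LTE} for $v$ at $D$, which holds by assumption.

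The ratio identity is where the hypothesis $BN=A$ enters. Taking logarithms of $u_p = \prod_{C'\in\cC} v_{C'}^{B_{C',p}}$, one obtains
\[
\log w_C - \log w_D \;=\; \sum_p (C_p - D_p)\log u_p \;=\; \sum_{C'\in\cC}\Bigl(\sum_p B_{C',p}(C_p - D_p)\Bigr)\log v_{C'}.
\]
For a transition $t$ with $I(t)=C$, $O(t)=D$, the $t$-th column of $N$ is precisely $D-C$, so the inner parenthesis equals $-(BN)_{C',t} = -A_{C',t} = \mathbf{1}_{I(t)=C'} - \mathbf{1}_{O(t)=C'} = \mathbf{1}_{C=C'} - \mathbf{1}_{D=C'}$. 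The double sum therefore collapses to $\log v_C - \log v_D$, as required.

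Nothing in this argument is delicate once the intermediate quantity $w_C$ is identified: the entire content of the lemma is the linear-algebra identity $BN=A$, which exactly translates the balance satisfied by $v$ on complexes into the balance satisfied by $u^{(\cdot)}$ on places. The rationality of $B$ together with strict positivity of $v$ guarantee that each $u_p$ is a well-defined strictly positive real, and weak reversibility is used only to ensure the existence of $v$ through Proposition \ref{prop:wrandLTE}. The one point requiring a brief comment is that the ratio identity is needed only along edges of the reaction graph, and this is precisely the set of pairs $(C,D)$ that appear in the LTE balance at $D$, so no further verification is necessary.
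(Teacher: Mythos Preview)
Your proof is correct and follows essentially the same approach as the paper's. The paper computes directly that $\prod_p u_p^{I(t)_p-O(t)_p}=v_{I(t)}/v_{O(t)}$ by substituting $u_p=\prod_C v_C^{B_{C,p}}$, collapsing exponents via $BN=A$, and reading off $A_{\cdot,t}$; this is exactly your ratio identity $w_{I(t)}/w_{O(t)}=v_{I(t)}/v_{O(t)}$, written multiplicatively rather than logarithmically, and both proofs then reduce the NLTE at each complex to the LTE in the same way. Your introduction of the auxiliary quantity $w_C=\prod_p u_p^{C_p}$ (which is the map of Lemma~\ref{le:nlte-lte}) makes the LTE shape of the NLTE more transparent, but the underlying computation is identical.
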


\begin{proof}
For each transition $t$, we have:
\begin{eqnarray*}
\prod_p u_p^{I(t)_p - O(t)_p} & = &  \prod_p u_p^{-N_{p,t}}  \qquad = \ \prod_p
\bigl( \prod_C v_C^{B_{C,p}} \bigr)^{-N_{p,t}} \\
& = &  \prod_C v_C^{ -\sum_p B_{C,p} N{p,t} } \ = \ \prod_C v_C^{
  -A_{C,t} } \ = \ \frac{v_{I(t)}}{v_{O(t)}}\:. 
\end{eqnarray*}

Then we have 
\begin{eqnarray*}
\bigl[ u\ \mbox{strictly positive sol. NLTE} \bigr] \ 
&\Longleftrightarrow& \ \forall C, \quad \sum_{I(t) = C} \kappa_t = \sum_{O(t) = C} \kappa_t
\prod_p u_p^{I(t)_p - O(t)_p} \\
&\Longleftrightarrow& \ \forall C, \quad  \sum_{I(t) = C} \kappa_t = \sum_{O(t) = C} \kappa_t
\frac{v_{I(t)}}{v_{O(t)}} \\
&\Longleftrightarrow& \ \bigl[ v\ \mbox{strictly positive sol. LTE} 
  \bigr] \:.
\end{eqnarray*}
\end{proof}

Let us comment on a specific point. Consider Theorem
\ref{thm:defzero}. The invariant measure is defined in fonction 
of a specific strictly positive solution to the NLTE. However, it is
easily seen that the NLTE may have several strictly positive
solutions. Is this
contradictory with the uniqueness of the stationary measure in the
ergodic case~? In the non-ergodic case, do we get several invariant
measures~? The next result answers these questions. 

\begin{mylem}
Assume that the Petri net is weakly reversible and has deficiency 0. Let
$u$, $\widetilde u$ be solutions to the NLTE and $\pi$, $\widetilde \pi$ be
the corresponding invariant measures (given by \eref{eq:thmdefzero}). Then there
exists a constant $K$ such that for all reachable marking $x$,
$\widetilde \pi(x) = K \pi(x)$. 
\end{mylem}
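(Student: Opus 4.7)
The plan is to show that the ratio $r(x):=\widetilde\pi(x)/\pi(x)=\prod_{p}(\widetilde u_p/u_p)^{x_p}$ is constant on $\cR(M_0)$; the common value is then the constant $K$. Set $w_p=\widetilde u_p/u_p>0$. Since any two markings reachable from $M_0$ differ by $N\sigma$ for some integer vector $\sigma\in\Z^{\cT}$ (the difference of the Parikh vectors of the two firing sequences), it suffices to show that firing any single transition $t$ leaves $r$ unchanged, i.e.\ that $\prod_{p}w_p^{N_{p,t}}=1$ for every $t\in\cT$.

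To obtain this, I will apply Lemma \ref{le:nlte-lte} to $u$ and to $\widetilde u$ separately: the vectors $v_C=\prod_p u_p^{C_p}$ and $\widetilde v_C=\prod_p\widetilde u_p^{C_p}$ are two strictly positive solutions of the LTE \eqref{eq:LTE}. As recalled in the proof of Proposition \ref{prop:wrandLTE}, the LTE are the balance equations $y\widetilde Q=0$ of the reaction Markov process. Weak reversibility means each connected component of the reaction graph is strongly connected, so $\widetilde Q$ restricted to each component is irreducible. By standard Perron--Frobenius theory, its invariant measure is unique up to a positive scalar. Hence on each connected component $\Gamma_i$ of the reaction graph there exists $\lambda_i>0$ with $\widetilde v_C=\lambda_i v_C$ for every $C\in\Gamma_i$.

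Translating back, this identity reads $\prod_p w_p^{C_p}=\lambda_i$ for every $C\in\Gamma_i$. For any transition $t$, the complexes $I(t)$ and $O(t)$ lie in the same connected component of the reaction graph, since they are joined by the arc corresponding to $t$. Therefore $\prod_p w_p^{I(t)_p}=\prod_p w_p^{O(t)_p}$, equivalently $\prod_p w_p^{O(t)_p-I(t)_p}=\prod_p w_p^{N_{p,t}}=1$, which is exactly the invariance needed to conclude.

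Two remarks. First, the argument uses only weak reversibility of the Petri net; deficiency $0$ enters the hypotheses of the lemma only via Theorem \ref{thm:Feinberg}, which guarantees the existence of strictly positive NLTE solutions in the first place. Second, I do not foresee any serious obstacle: the only delicate point is invoking Perron--Frobenius component by component on the reaction graph, which is routine once the LTE are recognized as balance equations of an irreducible process on each component.
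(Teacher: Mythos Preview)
Your proof is correct and follows the same route as the paper's: both reduce to showing that $\widetilde v_C/v_C$ is constant on each connected component of the reaction graph, where $v,\widetilde v$ are the LTE solutions obtained from $u,\widetilde u$ via Lemma~\ref{le:nlte-lte}. The only difference is that the paper outsources this last step to a citation (Feinberg~\cite[Proposition~4.1]{feinberg79}), whereas you supply the argument yourself by recognizing the LTE as the balance equations of the reaction process and invoking Perron--Frobenius componentwise; your observation that only weak reversibility (not deficiency~$0$) is used in the argument is also accurate.
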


\begin{proof}
It suffices to show that
\begin{equation}\label{eq:pi-pitilde}
\frac{\pi(x - I(t) + O(t))}{\pi(x)} = \frac{\widetilde\pi(x - I(t) +
O(t))}{\widetilde\pi(x)}\,,
\end{equation}
for all reachable marking $x$ and for all enabled transition $t$ of $x$.
Define $v=(v_C)_{C\in \cC}$, and $\widetilde v=(\widetilde v_C)_{C\in
  \cC}$ by
\[ v_C = \prod_{p} u_p^{C_p} ,\qquad  \widetilde v_C = \prod_{p} \widetilde
u_p^{C_p} \:.\]
According to Lemma \ref{le:nlte-lte}, $v$ and $\widetilde v$ are solutions to
the LTE.
Equality \eref{eq:pi-pitilde} holds if and only if 
\begin{equation*}
\prod_p u_p^{O(t)_p - I(t)_p} =
\prod_p
\widetilde u_p^{O(t)_p - I(t)_p}  
\Longleftrightarrow \frac{v_{O(t)}}{v_{I(t)}} = \frac{\widetilde
v_{O(t)}}{\widetilde v_{I(t)}} 
\Longleftrightarrow \frac{\widetilde v_{O(t)}}{v_{O(t)}} =
\frac{\widetilde v_{I(t)}}{v_{I(t)}}\,. 
\end{equation*}
The last equality is proved by Feinberg in \cite[Proposition 4.1]{feinberg79}.
\end{proof}

\subsection{Algorithmic complexity} 

Let us compare Proposition \ref{pr:HMSS} and
Theorem \ref{thm:defzero} from an algorithmic
point of view. 

In both cases, one needs to check weak
reversibility. Using Proposition \ref{prop:wrandLTE}, weak reversibility is
equivalent to the existence of a strictly positive solution to the
LTE. This last point can be checked in time $\mathcal O ( \cC^3)$.  
Then the procedures diverge. 

\begin{itemize}

\item {\em Proposition \ref{pr:HMSS}.} 
We need to compute the matrix $B$ satisfying $BN
= A$. This requires to solve $\cC$ linear systems of dimension $\cP\times
\cT$. So the time-complexity is  $\mathcal O ( \cC \cP\cT^2)$ using
Gaussian elimination.

\item {\em Theorem \ref{thm:defzero}.} 
We have seen in Section \ref{subsec:chem} that the deficiency 0 condition can be
checked in time $\mathcal O ( \cP \cT^2)$. Then one needs to compute a strictly positive solution to
the NLTE. This can be done as follows. Consider \eref{eq-le}, apply
the logarithm operation on both sides and solve the linear system. The
corresponding time-complexity is $\mathcal O ( \cP \cC^2) $ (= $\mathcal O ( \cP
\cT^2)$ since $|\cC|\leq 2|\cT|$). 

\end{itemize}

We conclude that it is more efficient asymptotically to determine the
product form by 
using the characterization in Theorem \ref{thm:defzero}.

\subsection{A product form Petri net with no solution to the NLTE}\label{sse-cex}

We have seen above that the existence of a strictly positive solution
to the NLTE is a sufficient condition for the existence of a product
form. However, it is not a necessary condition, and we now provide a
counter-example. 

\medskip

\begin{figure}[H]
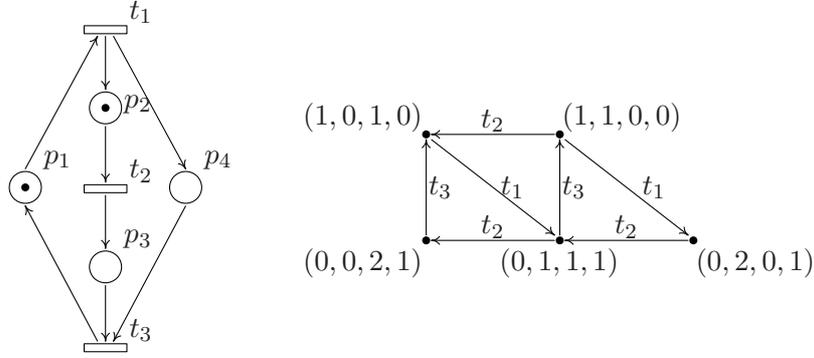

\centering

\begin{mfpic}{-80}{80}{-65}{65}
\tlabelsep{1pt}

\tlabel[tl](-94, 36){$p_2$}
\shiftpath{(-100, 0)}\circle{(0, 30), 6}
\point[3pt]{(-100, 30)}
\tlabel[bl](-94, -24){$p_3$}
\shiftpath{(-100, 0)}\circle{(0, -30), 6}
\tlabel[bl](-124, 6){$p_1$}
\shiftpath{(-100, 0)}\circle{(-30, 0), 6}
\point[3pt]{(-130, 0)}
\tlabel[bl](-64, 6){$p_4$}
\shiftpath{(-100, 0)}\circle{(30, 0), 6}

\shiftpath{(-100, 0)}\rect{(-8, 61), (8, 58)}
\tlabel[bl](-92, 61){$t_1$}
\shiftpath{(-100, 0)}\rect{(-8, 1), (8, -2)}
\tlabel[bl](-92, 1){$t_2$}
\shiftpath{(-100, 0)}\rect{(-8, -59), (8, -62)}
\tlabel[bl](-92, -59){$t_3$}

\arrow\shiftpath{(-100, 0)}\polyline{(-30, 7), (-3, 57)}
\arrow\shiftpath{(-100, 0)}\polyline{(30, -7), (3, -58)}
\arrow\shiftpath{(-100, 0)}\polyline{(3, 57), (30, 7)}
\arrow\shiftpath{(-100, 0)}\polyline{(-3, -58), (-30, -7)}
\arrow\shiftpath{(-100, 0)}\polyline{(0, 57), (0, 37)}
\arrow\shiftpath{(-100, 0)}\polyline{(0, 23), (0, 2)}
\arrow\shiftpath{(-100, 0)}\polyline{(0, -3), (0, -23)}
\arrow\shiftpath{(-100, 0)}\polyline{(0, -37), (0, -58)}

\point[3pt]{(20, 20), (70, 20), (20, -20), (70, -20), (120, -20)}

\arrow\polyline{(68, 20), (22, 20)}
\arrow\polyline{(72, 18), (118, -18)}
\arrow\polyline{(20, -18), (20, 18)}
\arrow\polyline{(70, -18), (70, 18)}
\arrow\polyline{(22, 18), (68, -18)}
\arrow\polyline{(118, -20), (72, -20)}
\arrow\polyline{(68, -20), (22, -20)}

\tlabel[br](20, 20){$(1, 0, 1, 0)$}
\tlabel[bl](70, 20){$(1, 1, 0, 0)$}
\tlabel[tl](120, -22){$(0, 2, 0, 1)$}
\tlabel[tr](20, -22){$(0, 0, 2, 1)$}
\tlabel[tc](70, -22){$(0, 1, 1, 1)$}

\tlabel[bc](45, 20){$t_2$}
\tlabel[bc](45, -20){$t_2$}
\tlabel[bc](95, -20){$t_2$}
\tlabel[cl](20, 0){$t_3$}
\tlabel[cl](70, 0){$t_3$}
\tlabel[cl](47, 0){$t_1$}
\tlabel[cl](100, 0){$t_1$}

\end{mfpic}

\caption{A product form Petri net which is not weakly reversible and has deficiency 1.}
\label{fig:def1notWR}
\end{figure}


Consider the Petri graph of Figure \ref{fig:def1notWR}.
It is not weakly reversible, so in particular there is no strictly
positive solution to the NLTE. Furthermore, the deficiency is 1.

Consider the initial marking
$M_0 = (1, 1, 0, 0)$. The marking graph is given on the right of
Figure \ref{fig:def1notWR}. By solving directly the balance equations
of the Marking process, we get:
\[
\pi(1010)=\frac{C}{\mu_1\mu_3}, \ \pi(1100)=\frac{C}{\mu_1\mu_2}, \ \pi(0111)=\frac{C}{\mu_2\mu_3},
\pi(0201)=\frac{C}{\mu_2^2}, \pi(0021)=\frac{C}{\mu_3^2} \:.
\]
We have 
$\pi(x, y, z,t) = C
(1/\mu_1)^{x}(1/\mu_2)^{y}(1/\mu_3)^{z}$. So the Petri net has a
product form.

Consider now the same Petri graph with a new initial marking $M_0' =
(1, 1, 1, 1)$. 
The marking graph is still finite (9 states), strongly
connected but we no longer have a product form stationary distribution. Indeed,
suppose that $\pi(x, y, z, t) = C a^xb^yc^zd^t$ where $a, b, c, d$ are strictly
positive constants. The balance equations for $M_1 = (2, 1, 0, 0)$ and
$M_2 = (1, 2, 0, 1)$ are:
\begin{equation*}
  \mu_3 \frac{cd}{a} = \mu_1 + \mu_2\,, \qquad 
  \mu_1 \frac{a}{bd} + \mu_3 \frac{cd}{a} = \mu_1 + \mu_2\,.
\end{equation*}
These equations cannot be simultaneously satisfied, so the stationary
distribution is not of product form. 

\medskip

To summarize, as opposed to the case of ``weakly reversible and
deficiency 0'' Petri nets, the existence of a product form depends on
the initial marking.

\section{Markovian free-choice nets and product
  form}\label{sec:freechoice}

The class of Petri nets whose Markovian versions have a product form is an
interesting one. It is therefore natural to study how this class 
intersects with the classical families of Petri nets: state machines
and free-choice Petri nets.

\medskip

The central result of this section is, in a sense, a negative result. 
We show that within the class of free-choice Petri
nets, the only ones which are weakly reversible are closely related to state
machines. We also show that the Markovian state
machines are ``equivalent to'' Jackson networks. The latter form the
most basic and classical examples of product form queueing networks.

\medskip

From now on, we consider only {\em non-weighted} Petri nets, that is Petri
nets with $I,O: \cT \rightarrow \{0,1\}^{\cP}$. In this case, the
input/output bags can be retrieved from the flow relation $\cF$ and we
can define the Petri net as a quadruple $(\cP,\cT,\cF,M_0)$. 
We also identify complexes with subsets of $\cP$.

\medskip

For a node $x\in \cT\cup \cP$, set $^{\bullet} x = \{y: \ (y,x) \in \cF \}$ and
$x^{\bullet}= \{y: \ (x,y) \in \cF \}$. For a set of nodes $S \subset \cT\cup
\cP$, set $^{\bullet} S = \bigcup_{x \in S}\ ^{\bullet} x$ and
$S^{\bullet}= \bigcup_{x \in S} x^{\bullet}$.

\subsection{State machines} \label{subsec:statemachines}

\begin{mydef}[State machine and generalized state machine]
A non-weighted Petri
net $\cN = (\mathcal P, \mathcal T, \mathcal F, M_0)$ is a:
\begin{itemize}
\item {\em State machine (SM)} if for all transition $t$,
$|^{\bullet}t|  = |t^{\bullet}| = 1$;
\item {\em Generalized state machine (GSM)} if for all transition $t$,
$|^{\bullet}t| \leq 1$, $|t^{\bullet}| \leq 1$.
\end{itemize}
\end{mydef}

\begin{mydef}[Associated state machine]\label{def:associatedsm}
Given a GSM $\cN = (\cP, \cT, \cF, M_0)$, the {\em
associated state machine} is $\cN' = (\cP', \cT, \cF', M_0')$ where:
\begin{itemize}
\item $\cP' = \cP \cup \{p\}, p \notin \cP\,,$
\item $\cF' = \cF \cup \{(p, t), t \in \cT, |^\bullet t| = 0\} \cup \{(t, p), t
\in \cT, |t^\bullet| = 0\}\,,$
\item $\forall x \in \cP, \ M_0'(x) = M_0(x), \qquad M_0'(p) = 0\,.$
\end{itemize}
\end{mydef}

\medskip

The figure below shows, from left to right,  a SM, a GSM and its associated SM.

\begin{figure}[H]
\centering
\begin{mfpic}{-150}{150}{-31}{38}
\tlabelsep{3pt}
\circle{(-6, 0), 6}

\circle{(-72, -8), 6}
\circle{(-72, 28), 6}
\circle{(-120, 10), 6}

\circle{(84, 0), 6}
\circle{(84, 24), 6}
\tlabel[bc](84, 30){$p$}

\rect{(-96, 35.5), (-93, 20.5)}
\rect{(-96, -15.5), (-93, -0.5)}

\rect{(-36, 7.5), (-33, -7.5)}
\rect{(24, 7.5), (21, -7.5)}

\rect{(54, 7.5), (57, -7.5)}
\rect{(114, 7.5), (111, -7.5)}

\arrow\polyline{(-113.25, 11.5), (-96.75, 28)}
\arrow\polyline{(-113.25, 8.5), (-96.75, -8)}

\arrow\polyline{(-92.25, 28), (-78.75, 28)}
\arrow\polyline{(-92.25, -8), (-78.75, -8)}

\arrow\polyline{(-32.25, 0), (-12.75, 0)}
\arrow\polyline{(.75, 0), (20.25, 0)}

\arrow\polyline{(57.75, 0), (77.25, 0)}
\arrow\polyline{(90.75, 0), (110.25, 0)}

\arrow\polyline{(114.75, 0), (120, 0), (120, 24), (90.75, 24)}
\arrow\polyline{(77.25, 24), (48, 24), (48, 0), (53.25, 0)}


\end{mfpic}
\end{figure}

\begin{mylem}\label{lem:isomorphsm}
The reaction graph and the Petri graph of a state machine are
isomorphic. The reaction graph of a GSM and the Petri graph
of its associated SM are isomorphic. 
\end{mylem}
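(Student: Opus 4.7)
The plan is to verify both claims by exploiting the fact that in a state machine every $I(t)$ and $O(t)$ is a singleton indicator. First I would handle the SM case by exhibiting an explicit bijection between complexes and places and between reaction-graph arcs and transitions. Then I would reduce the GSM case to the SM case applied to the associated SM, the only extra remark being that the reaction graph of $\cN$ coincides with that of the associated SM up to a single relabeling.

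For the SM case: let $\cN$ be a state machine. For each $t \in \cT$, the conditions $|{}^\bullet t| = |t^\bullet| = 1$ yield unique places $p_t, q_t \in \cP$ with $I(t) = \mathbf{1}_{\{p_t\}}$ and $O(t) = \mathbf{1}_{\{q_t\}}$. Hence every complex is a singleton indicator, and the map $\phi : \cC \to \cP,\ \mathbf{1}_{\{p\}} \mapsto p$ is injective with image equal to the set $\cP_0 \subseteq \cP$ of places incident to at least one transition. Now view the Petri graph of $\cN$ as the directed multigraph on $\cP_0$ obtained by contracting each transition $t$ to a single arc from $p_t$ to $q_t$ (the natural way to encode a state machine as a multigraph). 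The pair $(\phi, \mathrm{id}_{\cT})$ is then a graph isomorphism with the reaction graph: by definition $(\mathbf{1}_{\{p\}}, \mathbf{1}_{\{q\}}) \in \cT$ if and only if there is a transition $t$ with $p_t = p$ and $q_t = q$, that is, iff the contracted Petri-graph arc associated to $t$ goes from $\phi(\mathbf{1}_{\{p\}})$ to $\phi(\mathbf{1}_{\{q\}})$.

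For the GSM case: let $\cN = (\cP, \cT, \cF, M_0)$ be a GSM and $\cN' = (\cP \cup \{p\}, \cT, \cF', M_0')$ its associated SM in the sense of Definition \ref{def:associatedsm}. Since $\cN'$ is a SM, the first part of the lemma gives an isomorphism between its reaction graph and its Petri graph. It remains to identify the reaction graph of $\cN$ with that of $\cN'$. In $\cN$ each $I(t)$ (resp. $O(t)$) is either a singleton $\mathbf{1}_{\{q\}}$ or the empty complex $\emptyset$, the latter occurring precisely when $|{}^\bullet t| = 0$ (resp. $|t^\bullet| = 0$); in $\cN'$ the very same transitions have the added place $p$ as their unique input (resp. output), so $I(t)$ (resp. $O(t)$) becomes $\mathbf{1}_{\{p\}}$ in $\cN'$ exactly where it was $\emptyset$ in $\cN$. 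Thus relabeling $\emptyset \leftrightarrow \mathbf{1}_{\{p\}}$ and leaving all other singletons unchanged is an isomorphism between the two reaction graphs, which composed with the isomorphism of the first part yields the desired isomorphism between the reaction graph of $\cN$ and the Petri graph of $\cN'$.

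The main obstacle is essentially a definitional one rather than a mathematical one: the Petri graph is a priori bipartite while the reaction graph is not, so one has to fix the convention of contracting each transition of a (generalized) state machine to a single arc in order to make sense of the word \emph{isomorphic}. Once this convention is adopted, both parts of the lemma reduce to the routine verification above.
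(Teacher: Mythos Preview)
Your proof is correct and follows essentially the same approach as the paper: identify each complex with a single place, contract each path $p \to t \to q$ in the Petri graph to the arc $p \to q$, and for the GSM case match the empty complex with the added place of the associated SM. You are more explicit than the paper about the contraction convention (the Petri graph being bipartite) and about the possibility of isolated places via $\cP_0$, but the underlying argument is the same.
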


\begin{proof}
In a SM, each complex is just one place. 
Starting from the Petri graph and replacing $[p\rightarrow t \rightarrow q],\
p,q\in \cP, \ t \in \cT,$ by $[p\rightarrow q]$, we get the reaction
graph. For GSM, the mapping is the same with the empty complex
corresponding to the ``new'' place in the associated SM. 
\end{proof}

\begin{mycor}
A SM is weakly reversible iff each 
connected component is strong\-ly connected. A GSM is
weakly reversible iff in the associated SM, each
connected component is strongly connected.
\end{mycor}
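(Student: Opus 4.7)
The plan is to observe that this corollary is an immediate consequence of Lemma \ref{lem:isomorphsm} combined with the definition of weak reversibility (Def. \ref{def:weakrevers}).

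For the first statement, I would argue as follows. By definition, a Petri net is weakly reversible iff every connected component of its \emph{reaction} graph is strongly connected. For a state machine, Lemma \ref{lem:isomorphsm} provides an isomorphism between the reaction graph and the Petri graph. Since graph isomorphism preserves both connected components and strong connectivity, weak reversibility of the SM is equivalent to the condition that every connected component of its Petri graph is strongly connected.

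For the second statement concerning a GSM $\cN$, the argument is structurally identical but routed through the associated SM $\cN'$. By Lemma \ref{lem:isomorphsm}, the reaction graph of $\cN$ is isomorphic to the Petri graph of $\cN'$. Applying the definition of weak reversibility to $\cN$ and transferring through this isomorphism yields the claim.

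There is essentially no obstacle: once Lemma \ref{lem:isomorphsm} is in hand, the corollary is just an unpacking of definitions modulo the observation that the two graph-theoretic properties (being decomposable into strongly connected components, i.e.\ every connected component being strongly connected) are invariant under isomorphism of directed graphs. The only minor point worth a sentence of justification is that the isomorphism in the GSM case sends connected components of the reaction graph of $\cN$ bijectively to connected components of the Petri graph of $\cN'$, preserving strong connectivity within each — but this is immediate from the fact that it is a directed graph isomorphism.
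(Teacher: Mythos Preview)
Your proposal is correct and matches the paper's intent exactly: the corollary is stated without proof because it follows immediately from Lemma~\ref{lem:isomorphsm} and Definition~\ref{def:weakrevers}, precisely as you spell out. There is nothing to add.
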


In a SM, the complexes are the places. So the NLTE and the LTE
coincide exactly. For a GSM, the complexes are the places
and the empty set. With the convention $y_{\emptyset} = 1$, we
still have that the NLTE and the LTE coincide. The next proposition
follows. 

\begin{myprop}\label{prop:wrstatemachineNLTE}
Consider a weakly reversible GSM. 
For every rates $(\kappa_t)_t$, the
NLTE have a strictly positive solution.
\end{myprop}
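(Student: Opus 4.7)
The plan is to reduce the NLTE to the LTE and then invoke the equivalence already established in Proposition \ref{prop:wrandLTE}. The two sentences preceding the proposition already contain the essential observation, so the proof is just a matter of making it precise.

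First I would unpack the shape of a complex in a GSM. Since each transition $t$ satisfies $|{}^\bullet t|\leq 1$ and $|t^\bullet|\leq 1$, the input bag $I(t)$ and the output bag $O(t)$ are each either the zero vector (the empty complex $\emptyset$) or an indicator $\mathbf{1}_{\{p\}}$ for some place $p$. Hence every complex $C\in\cC$ is either $\emptyset$ or of the form $\mathbf{1}_{\{p\}}$ for some $p\in\cP$, and in the latter case the product $\prod_{p:C_p\neq 0} x_p^{C_p}$ reduces to the single variable $x_p$. In the former case, by convention, the empty product equals $1$.

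Next I would rewrite the NLTE under this identification. Making the substitution $y_C := \prod_{p:C_p\neq 0} x_p^{C_p}$ (so that $y_{\mathbf{1}_{\{p\}}} = x_p$ and $y_\emptyset = 1$), the NLTE \eref{eq:NLTE} become term-by-term the LTE \eref{eq:LTE} with the normalization $y_\emptyset=1$ appearing automatically. Conversely, given a strictly positive solution $(y_C)_{C\in\cC}$ to the LTE, set $x_p := y_{\mathbf{1}_{\{p\}}}$ for each place $p$ belonging to a singleton complex, and choose arbitrary strictly positive values for the (irrelevant) remaining places; this produces a strictly positive solution to the NLTE.

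Finally, I would invoke Proposition \ref{prop:wrandLTE}: since the GSM is weakly reversible, for every choice of rates $(\kappa_t)_{t\in\cT}$ the LTE admit a strictly positive solution. By the equivalence above, the NLTE then admit a strictly positive solution as well. No step is particularly delicate; the only thing to be careful about is the bookkeeping that identifies complexes with places (plus possibly $\emptyset$) and confirms that the normalization $y_\emptyset=1$ matches the convention that the empty product equals $1$.
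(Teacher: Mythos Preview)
Your proof is correct and follows exactly the approach of the paper: identify the complexes of a GSM with the places (plus possibly $\emptyset$), observe that under this identification the NLTE and the LTE coincide, and then invoke Proposition~\ref{prop:wrandLTE}. The paper's proof is terser because the identification is stated just before the proposition, but your explicit unpacking (including the handling of places not appearing in any complex) is a faithful elaboration of the same argument.
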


\begin{proof}
In the weakly reversible case, the LTE have 
a strictly positive solution for every choice of the rates,
Proposition \ref{prop:wrandLTE}. Therefore the NLTE have a strictly 
positive solution for every choice of the rates.
\end{proof}

The above proof does not require Feinberg's Theorem \ref{thm:Feinberg}.
However, it turns out that the deficiency is 0, which provides a
second proof of Prop. \ref{prop:wrstatemachineNLTE} using Theorem
\ref{thm:Feinberg}. 

\begin{myprop}\label{prop:gensmdef0}
Generalized state machines have deficiency 0.
\end{myprop}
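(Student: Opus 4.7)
The plan is to show $\mbox{rank}(N) = \mbox{rank}(A)$ directly, after which Lemma \ref{lem:rankA} gives $\mbox{rank}(N) = |\mathcal{C}| - \ell$ and hence $\delta = 0$.

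First I would use the GSM hypothesis to pin down the shape of the complexes: since every transition has $|^{\bullet}t| \leq 1$ and $|t^{\bullet}| \leq 1$, every complex is either a singleton $\mathbf{1}_{\{q\}}$ with $q \in \mathcal{P}$, or the empty complex $\emptyset = 0$. Identifying $\mathbf{1}_{\{q\}}$ with $q$, the set of complexes $\mathcal{C}$ is a subset of $\mathcal{P} \cup \{\emptyset\}$, and the rows of $A$ are indexed accordingly.

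The central step is a column-by-column comparison of $N$ and $A$. For each $q \in \mathcal{P}$, unrolling the definitions yields $N_{q,t} = O(t)_q - I(t)_q = \mathbf{1}_{\{O(t) = \mathbf{1}_{\{q\}}\}} - \mathbf{1}_{\{I(t) = \mathbf{1}_{\{q\}}\}} = A_{q,t}$, so the $\mathcal{P}$-indexed rows of $A$ are exactly the rows of $N$. Each column of $A$ has exactly one $+1$ (at the row $O(t)$) and one $-1$ (at the row $I(t)$), so it sums to $0$; hence when $\emptyset \in \mathcal{C}$ the remaining row satisfies $A_{\emptyset, t} = -\sum_{q \in \mathcal{P}} A_{q,t} = -\sum_{q \in \mathcal{P}} N_{q,t}$, which is a linear combination of the $\mathcal{P}$-indexed rows and thus does not change the rank when adjoined. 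When $\emptyset \notin \mathcal{C}$ (the GSM is an honest SM), $A$ is simply $N$ with rows reindexed. In either case $\mbox{rank}(A) = \mbox{rank}(N)$, so Lemma \ref{lem:rankA} yields $\delta = |\mathcal{C}| - \ell - \mbox{rank}(N) = \mbox{rank}(A) - \mbox{rank}(N) = 0$. The only bookkeeping needed is the case distinction on whether $\emptyset$ is a complex, but both cases reduce to the same trivial rank-preservation observation, so I do not anticipate any serious obstacle.
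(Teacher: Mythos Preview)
Your argument is correct and reaches the same conclusion as the paper, but the packaging differs slightly. The paper splits into two cases: for an honest SM it invokes Lemma~\ref{lem:isomorphsm} (reaction graph $\cong$ Petri graph) to get $A=N$ outright; for a genuine GSM it passes through the \emph{associated state machine} $\cN'$ of Definition~\ref{def:associatedsm}, observes that $\cN$ and $\cN'$ share the same reaction graph (hence the same $|\cC|$ and $\ell$), and then shows $\mbox{rank}(N')=\mbox{rank}(N)$ because the extra row of $N'$ corresponding to the new place $p$ equals $-\sum_{s\in\cP} N_{s,\cdot}$. Since $\cN'$ is a SM with deficiency $0$, so is $\cN$.

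You bypass the associated-SM detour and compare $N$ and $A$ directly: the singleton-complex rows of $A$ coincide with the corresponding rows of $N$, and the $\emptyset$-row of $A$ (when present) is exactly the same linear combination $-\sum_{q\in\cP} N_{q,\cdot}$ that the paper attaches to the new place of $\cN'$. So the underlying linear-algebra observation is identical; you just apply it to $A$ rather than to $N'$. Your route is marginally more economical since it avoids introducing $\cN'$ and Lemma~\ref{lem:isomorphsm}; the paper's route has the advantage of making the associated-SM construction do double duty (it is also used conceptually in \S\ref{subsec:Jacksonnets}). One small point of bookkeeping you gloss over: $A$ is indexed by $\cC$, not $\cP$, so strictly speaking the singleton-complex rows of $A$ match only those rows of $N$ indexed by places that actually occur as complexes; any remaining rows of $N$ (for places adjacent to no transition) are zero and do not affect the rank, so the conclusion is unaffected.
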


\begin{proof}
Consider first a state machine. Let $N$ be the 
incidence matrix, and let 
$A$ be the node-arc incidence matrix of the reaction graph, see
\eref{eq-A}. 
Using Lemma
\ref{lem:isomorphsm}, we get immediately that $A=N$. So, in particular,
we have $\mbox{rank}(A)=\mbox{rank}(N)$ and the deficiency is 0. 

\medskip

Consider now a GSM $\cN$ and its associated SM
$\cN'$.
Call $\cC$ (resp. $\cC'$), $N$ (resp. $N'$) and $\ell$ (resp. $\ell'$)
the set of complexes, the incidence matrix and the number of connected
components of the reaction graph of $\cN$ (resp. $\cN'$).

Since $\cN$ and $\cN'$ have the same reaction graph
(Lemma
\ref{lem:isomorphsm}), we have:
\begin{equation}
|\cC| = |\cC'|\,,\,\ell = \ell'\,. \label{eq:ccll}
\end{equation}

By construction of $\cN'$, $N'$ is $N$ augmented with a row
$(x_t)_{t \in \cT}$ defined by 
$x_t = {\bf 1}_{\{t^{\bullet} = \emptyset\}} - {\bf
  1}_{\{{}^{\bullet}t=\emptyset\}}$
(where $t^\bullet$ and ${}^\bullet t$ are defined in $\cN$). 
We have $\mbox{rank} (N') \geq \mbox{rank} (N)$. 
On the other hand, observe that $\forall t \in \cT, x_t = - \sum_{s \in \cP}
N_{s, t}$, so $N' = BN$, where $B$ is the $\cP \times \cP$ identity matrix
augmented with the row $(-1, \dots, -1)$. Hence $\mbox{rank} (N') =
\mbox{rank}(BN) \leq \mbox{rank} (N)$.
So:
\begin{equation}
\mbox{rank} (N')
= \mbox{rank} (N)\,. \label{eq:rankrank}
\end{equation}

Together \eref{eq:ccll} and \eref{eq:rankrank} imply that $\cN$ and $\cN'$
have the same deficiency. Since $\cN'$ is a SM, it has deficiency
zero, so $\cN$ also has deficiency zero.
\end{proof}

By coupling Proposition \ref{prop:wrstatemachineNLTE} and Theorem
\ref{thm:Kelly}, or alternatively Proposition
\ref{prop:gensmdef0} and Theorem \ref{thm:defzero}, we get
the result below.

\begin{mythm}\label{th:sm}
Consider a Markovian weakly reversible GSM. The deficiency is 0. 
The NLTE have a strictly positive solution $(u_p)_p$. The marking process admits
a product form invariant measure given by:
$\forall x \in \cR(M_0)$, 
\[
\pi(x) = \Phi(x)^{-1} \prod_{p \in \mathcal P} u_p^{x_p}\,.
\]
If $\emptyset \not\in \cC$, the state space $\cR(M_0)$ is finite, the marking process is
ergodic, and $\pi$ can be normalized to give a product form stationary
distribution:
$\forall x \in \cR(M_0)$, 
\[
\widetilde{\pi}(x) = B \Phi(x)^{-1} \prod_{p \in \mathcal P} u_p^{x_p}\,,
\]
where $B = \bigl( \sum_{x\in \cR(M_0)} \Phi(x)^{-1} \prod_{p \in
  \mathcal P} u_p^{x_p} \bigr)^{-1}$.
\end{mythm}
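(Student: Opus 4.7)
The plan is to derive this theorem as a direct corollary of the machinery already in place, with the only non-trivial additional ingredient being the finiteness of $\cR(M_0)$ in the case $\emptyset \notin \cC$. The parenthetical remark preceding the theorem indicates exactly this structure, so I would follow it literally.

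First, I would invoke Proposition \ref{prop:gensmdef0} to assert that every GSM has deficiency $0$. Combined with the weak reversibility hypothesis, this places us squarely in the scope of Theorem \ref{thm:defzero}, which hands us, in one stroke, a strictly positive solution $(u_p)_p$ of the NLTE together with the product form invariant measure $\pi(x) = \Phi(x)^{-1}\prod_p u_p^{x_p}$ on $\cR(M_0)$. This settles the first half of the statement verbatim, so no work is required beyond citing the two earlier results.

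For the second half, the key structural observation is that $\emptyset \notin \cC$ means no complex is the empty vector, i.e. every transition $t$ has $I(t) \neq \emptyset$ and $O(t) \neq \emptyset$. Combined with the GSM constraints $|^{\bullet} t| \leq 1$ and $|t^{\bullet}| \leq 1$, this forces $|^{\bullet} t| = |t^{\bullet}| = 1$ for every $t$, so the net is in fact a genuine state machine. In a state machine, firing any transition moves a single token from one place to another, so the total number of tokens $\sum_p M(p)$ is invariant along any firing sequence. Consequently $\cR(M_0)$ is contained in the finite set $\{M \in \N^{\cP} : \sum_p M(p) = \sum_p M_0(p)\}$.

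Finally, since $\pi(x) > 0$ for every $x \in \cR(M_0)$, the remark following Theorem \ref{thm:Kelly} yields irreducibility of the marking process. An irreducible continuous-time Markov chain on a finite state space is automatically ergodic, so $\pi$ has finite mass and its normalization is the unique stationary distribution, giving the announced formula for $\widetilde\pi$ with the stated constant $B$. The whole argument is essentially a bookkeeping exercise; the only step that is not purely citation is the case analysis that promotes a GSM with $\emptyset \notin \cC$ to a state machine, and this is the obstacle, if any, that one must not overlook.
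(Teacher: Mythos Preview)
Your proposal is correct and follows exactly one of the two routes the paper itself indicates (Proposition~\ref{prop:gensmdef0} combined with Theorem~\ref{thm:defzero}); the paper offers no more detail than this citation, so your added argument that $\emptyset\notin\cC$ forces the GSM to be an SM with conserved total token count, hence finite $\cR(M_0)$, simply spells out what the paper leaves implicit.
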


\medskip

Theorem \ref{th:sm} is far from a surprising or new result, as we now show. 

\subsection{Jackson networks} \label{subsec:Jacksonnets}

The product form result for Jackson networks is one of the
cornerstones of Markovian queueing theory. It was originally proved by
Jackson~\cite{jack57} for open networks and by Gordon $\&$
Newell~\cite{GoNe} for closed networks. 

\medskip

Consider a Markovian weakly reversible SM with constant rates
$(\kappa_t)_{t\in \cT}$.  It can
be transformed into a Jackson network as follows:

\begin{itemize}
\item A place $s$ becomes a simple queue, that is a single server
  Markovian queue with an infinite buffer. The service rate at queue $s$ is
  $\mu_s = \sum_{t\in s^{\bullet}} \kappa_t$. 
\item The routing matrix $P$ of the Jackson network is the stochastic
  matrix defined by: 
\[
\forall u,v \in \cP, \qquad P_{u,v} = \begin{cases}  \mu_u^{-1} \sum_{t: ^{\bullet}t =u, t^{\bullet}
      =v} \kappa_t & \mbox{if } \exists t \in \cT,
  u\rightarrow t \rightarrow v \\
0 & \mbox{otherwise}
\end{cases} \:.
\]
\item A token in place $s$ becomes a customer in queue $s$. 
\end{itemize}

Consider now a Markovian weakly reversible GSM with constant rates
$(\kappa_t)_{t\in \cT}$. On top of the above transformations, we do the
following:

\begin{itemize}
\item A transition $t$ with ${}^{\bullet}t=\emptyset$ becomes an
  external Poisson arrival flow of rate $\kappa_t$ in queue
  $t^{\bullet}$. 
\end{itemize} 

The routing matrix $P$ is now substochastic. Indeed, if the transition $t$
is such that $t^{\bullet}=\emptyset$, then $\sum_v P_{{}^{\bullet}t,v}
< 1$. 

\medskip

In the SM case, the Jackson network is {\em closed}, that is
without arrivals from nor departures to the
outside. In the GSM case with input and output transitions,
the Jackson network is {\em open}. 


A Jackson network can be translated into a Markovian (G)SM
using the reverse construction.

\medskip

The two models are identical in a strong sense. 
Precisely, the marking process of the state machine and the
queue-length process of the Jackson network have
the same infinitesimal generator. 

\begin{figure}[H]
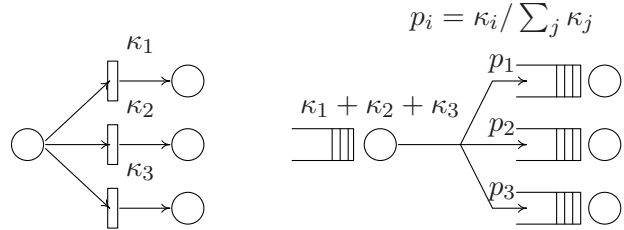

\centering
\begin{mfpic}{-130}{130}{-72}{72}
\tlabelsep{3pt}

\circle{(-114, 0), 6}
\circle{(-54, 0), 6}
\circle{(-54, 24), 6}
\circle{(-54, -24), 6}

\rect{(-84, 7.5), (-80.25, -7.5)}
\tlabel[bl](-80.25, 7.5){$\kappa_2$}

\rect{(-84, 31.5), (-80.25, 16.5)}
\tlabel[bl](-80.25, 31.5){$\kappa_1$}

\rect{(-84, -31.5), (-80.25, -16.5)}
\tlabel[bl](-80.25, -16.5){$\kappa_3$}

\arrow\polyline{(-107.25, 0), (-83.25, 0)}
\arrow\polyline{(-107.25, 1.5), (-83.25, 24)}
\arrow\polyline{(-107.25, -1.5), (-83.25, -24)}

\arrow\polyline{(-79.5, 0), (-60.75, 0)}
\arrow\polyline{(-79.5, 24), (-60.75, 24)}
\arrow\polyline{(-79.5, -24), (-60.75, -24)}


\circle{(18, 0), 6}
\polyline{(-15, 6), (8, 6), (8, -6), (-15, -6)}
\polyline{(6, 6), (6, -6)}
\polyline{(3, 6), (3, -6)}
\polyline{(0, 6), (0, -6)}

\tlabel[bc](18, 7.5){$\kappa_1 + \kappa_2 + \kappa_3$}

\circle{(102, 0), 6}
\polyline{(69, 6), (93, 6), (93, -6), (69, -6)}
\polyline{(87, 6), (87, -6)}
\polyline{(90, 6), (90, -6)}
\polyline{(84, 6), (84, -6)}

\circle{(102, 24), 6}
\polyline{(69, 30), (93, 30), (93, 18), (69, 18)}
\polyline{(87, 30), (87, 18)}
\polyline{(90, 30), (90, 18)}
\polyline{(84, 30), (84, 18)}

\circle{(102, -24), 6}
\polyline{(69, -30), (93, -30), (93, -18), (69, -18)}
\polyline{(87, -30), (87, -18)}
\polyline{(90, -30), (90, -18)}
\polyline{(84, -30), (84, -18)}

\arrow\polyline{(24.75, 0), (72.75, 0)}
\tlabel[bc](63.75, 0){$p_2$}

\arrow\polyline{(48, 0), (60, 24), (72.75, 24)}
\tlabel[bc](63.75, 24){$p_1$}

\arrow\polyline{(48, 0), (60, -24), (72.75, -24)}
\tlabel[bc](63.75, -24){$p_3$}

\tlabel[bc](63.75, 37.5){$p_i = \kappa_i / \sum_j \kappa_j$}


\end{mfpic}
\caption{From (generalized) state machine (left) to Jackson network (right).}
\label{fig:statemachinetoJacksonnet}
\end{figure}

\medskip

The classical product form results for Jackson networks
(Jackson~\cite{jack57} and Gordon $\&$ Newell~\cite{GoNe})
are exactly the translation via the
above transformation of Theorem \ref{th:sm}. In the open case, the
weak-reversibility implies the classical ``without capture'' condition
of Jackson networks. 

\medskip

The above transformation from GSM to queueing network can
also be performed in the case of general rate functions of type
\eref{eq:generalratefunction}. Queueing networks with those rate functions are
called {\em Whittle networks} in the literature. The
existence of product form invariant measures for these networks is a
classical result, see for instance \cite{serf99} and the references
therein. 

\subsection{Free-choice Petri nets}\label{subsec:freechoice}

We study the family of live and bounded free-choice nets. This is an
important class of Petri nets realizing a nice compromise between
modelling power and tractability, see the dedicated
monography of Desel $\&$ Esparza~\cite{DeEs}. 
We show that the only such Petri nets having a product form are, in a
sense, the GSM. 

\begin{mydef}[Free-choice Petri net]\label{def:freechoice}
A {\em free-choice} Petri net is a non-weighted Petri net $(\mathcal P, \mathcal
T, \mathcal F, M_0)$
such that: for every two transitions $t_1$ and $t_2$,
either 
$^{\bullet}t_1  = \ ^{\bullet}t_2$ or $^{\bullet}t_1 \cap \ ^{\bullet}t_2 =
\emptyset$.
\end{mydef}

Some authors call the above an {\em extended free-choice} Petri net 
and have a more restrictive definition for free-choice Petri nets.

\medskip

In the figure below, the Petri net on the left is
free-choice, while the one on the right is not free-choice.

\begin{figure}[H]
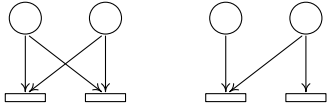

\centering
\begin{mfpic}{-75}{75}{-22.5}{37.5}
\rect{(-60, 1.5), (-45, -1.5)}
\rect{(-30, 1.5), (-15, -1.5)}
\rect{(15, 1.5), (30, -1.5)}
\rect{(45, 1.5), (60, -1.5)}

\circle{(-52.5, 30), 6}
\circle{(-22.5, 30), 6}
\circle{(22.5, 30), 6}
\circle{(52.5, 30), 6}

\arrow \polyline{(-52.5, 23.25), (-52.5, 2.25)}
\arrow \polyline{(-22.5, 23.25), (-22.5, 2.25)}
\arrow \polyline{(-51, 23.25), (-24, 2.25)}
\arrow \polyline{(-24, 23.25), (-51, 2.25)}

\arrow \polyline{(22.5, 23.25), (22.5, 2.25)}
\arrow \polyline{(52.5, 23.25), (52.5, 2.25)}
\arrow \polyline{(51, 23.25), (24, 2.25)}

\tlabelsep{3pt}
\end{mfpic}
\caption{Free-choice (left) and non free-choice (right) Petri nets.}
\label{fig:freechoice}
\end{figure}

\begin{mydef}[Cluster]\label{def:cluster}
The {\em cluster} of a node $x\in \cP\cup \cT$, denoted by $[x]$, is the minimal
set of nodes
such that: $(i)$ $x \in [x]\,;$ $(ii)$ $\forall t \in \mathcal T: \ t \in [x] \implies \ ^{\bullet}t 
\subset [x]\,;$ $(iii)$ $\forall p \in \mathcal P: \ p \in [x] \implies  p^{\bullet} \subset
[x]\,.$
\end{mydef}

The clusters form a partition of the set of nodes, see
\cite[Proposition 4.5]{DeEs}, and
therefore of the places. Moreover, we have the following. 

\begin{mylem}\label{lem:isomorphismclusterscomplexes}
Consider a weakly reversible free-choice Petri net.
The non-empty complexes are disjoint
subsets
of $\cP$. The partition of $\cP$ induced by the non-empty complexes is the same
as the partition of $\cP$ induced by the clusters. 
\end{mylem}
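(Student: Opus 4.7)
My plan is to split the lemma into two sub-claims: (a) any two non-empty complexes are either equal or disjoint (so the non-empty complexes do form a partition of the set of places they cover), and (b) this partition agrees with the restriction of the cluster partition to $\cP$. The common engine for both parts is the following reduction: thanks to weak reversibility, every complex has an outgoing arc in the reaction graph, so every non-empty complex has the form $u = {}^{\bullet}t$ for some $t\in\cT$. In particular, each non-empty complex is encoded by a preset, which is the object directly controlled by the free-choice hypothesis.

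For (a), suppose $u = {}^{\bullet}t_1$ and $u' = {}^{\bullet}t_2$ are two non-empty complexes sharing a place $p$. Then $p\in {}^{\bullet}t_1 \cap {}^{\bullet}t_2$, so the free-choice condition forces ${}^{\bullet}t_1 = {}^{\bullet}t_2$, i.e. $u=u'$. This is the easy half.

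For (b), I would compute the cluster $[p]$ of a place $p$ with $p^{\bullet}\neq \emptyset$ explicitly. Pick any $t_0\in p^{\bullet}$; by rule (iii) of the cluster definition, $p^{\bullet}\subset [p]$, and by rule (ii), ${}^{\bullet}t\subset [p]$ for every $t\in p^{\bullet}$. By free-choice, every such $t$ satisfies ${}^{\bullet}t = {}^{\bullet}t_0$ (since $p\in {}^{\bullet}t\cap {}^{\bullet}t_0$), so $\bigcup_{t\in p^{\bullet}}{}^{\bullet}t = {}^{\bullet}t_0$. The key step is then to check that the closure stabilises: for any $q\in {}^{\bullet}t_0$ and any $s\in q^{\bullet}$, free-choice applied to $t_0$ and $s$ gives ${}^{\bullet}s = {}^{\bullet}t_0$, so $p\in {}^{\bullet}s$, i.e. $s\in p^{\bullet}$; hence $q^{\bullet}\subset p^{\bullet}$ and no new nodes are ever added. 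This yields $[p]\cap \cP = {}^{\bullet}t_0$, which is precisely the non-empty complex containing $p$. Since every non-empty complex equals ${}^{\bullet}t_0$ for some transition by the reduction at the start, the two partitions coincide.

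The main obstacle I expect is the stabilisation argument above: one has to exploit free-choice twice, once to collapse all presets of transitions in $p^{\bullet}$ to a single set, and once more to show that iterating the closure does not escape that set. The only remaining annoyance is handling places $p$ with $p^{\bullet}=\emptyset$; using weak reversibility one argues that such a place cannot appear in any output bag either (any $t^{\bullet}$ containing $p$ is a complex, hence by weak reversibility equals some ${}^{\bullet}s$, contradicting $p^{\bullet}=\emptyset$), so such $p$ is isolated and lies outside both partitions, making the correspondence consistent.
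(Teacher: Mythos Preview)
Your proof is correct and follows essentially the same approach as the paper: both arguments reduce non-empty complexes to input bags via weak reversibility, derive disjointness from the free-choice hypothesis, and then verify that the candidate set $I\cup I^{\bullet}$ (your ${}^{\bullet}t_0\cup p^{\bullet}$) is closed under the cluster rules, hence equals the cluster. Your place-centric presentation is a cosmetic variant of the paper's complex-centric one, and your explicit handling of the degenerate case $p^{\bullet}=\emptyset$ is in fact more careful than the paper, which tacitly assumes every cluster meeting $\cP$ contains a non-empty input bag.
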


\begin{proof}
In a weakly reversible free-choice Petri net, the non-empty complexes are also
non-empty input bags, which are disjoint according to the definition of
free-choiceness.

It follows from the definition of clusters that every non-empty input bag is
entirely contained in a cluster. This cluster is unique because the clusters
partition the set of places. Let $I$ be
a non-empty complex (which is also a non-empty input bag). Denote by $[I]$ the
cluster containing $I$. We have $I^\bullet \subset [I]$, so $I \cup I^\bullet
\subset [I]$. Since the Petri net is free-choice, ${}^\bullet t = I$
for all $t
\in I^\bullet$. The set $I \cup I^\bullet$ satisfies the three conditions of
the definition of clusters, so we have $[I] \subset I \cup
I^\bullet$. We conclude that $[I] = I \cup
I^\bullet$ and $I$ is the set of places of the cluster $[I]$.

Conversely, let $[x]$ be a cluster such that $[x] \cap \cP \neq \emptyset$.
Let $I$ be a non-empty input bag contained in $[x]$. We have, using
the above, $[I] = I \cup I^\bullet \subset [x]$. By minimality, 
$[I]=[x]$ and $[x]\cap \cP = I$. 
\end{proof}

The above result is not true for a non-WR free-choice Petri net. 
Consider for instance the Petri net represented in the upper-right
part of Figure \ref{fi-4}. 

\medskip

Under the assumptions of Lemma \ref{lem:isomorphismclusterscomplexes},
the non-empty complexes are disjoint. Thus each non-empty complex
behaves as if it was a ``big place''. 
Consider the operation which reduces each non-empty complex to a single
place. The resulting Petri net is a generalized state machine. And this
generalized state
machine is weakly
reversible because the original free-choice Petri net was weakly
reversible. Let us define all this more formally. 

\medskip

\begin{mydef}[Reduced generalized state machine]\label{def:reducedstatemachine}
Let $\cN = (\cP, \cT, \cF, M_0)$ be a weakly reversible free-choice
Petri net with set of complexes $\cC$.
We call the {\em reduced generalized state machine (RGSM)} of $\cN$ the
GSM $\cR\cN =
(\cC \setminus \{\emptyset\},
\cT, \widetilde{\cF}, \widetilde{M}_0)$ where:
\begin{itemize}
\item $\widetilde{\cF} = \{({}^\bullet t, t), (t, t^\bullet), t \in \cT \}$;
\item $\widetilde{M}_0$ is defined by: $\forall C \in \cC \setminus
  \{\emptyset\}$, $\widetilde{M}_0(C) =
\min_{p \in C} M_0(p)$.
\end{itemize}
\end{mydef}

The Petri graph of $\cR\cN$ is the reaction graph of $\cN$
reinterpreted as a Petri graph (the nodes of the reaction graph are
the places, and a transition is added to each arc). 

\begin{mylem}\label{lem:isomorphpnsm}
Let $\cN$ be a weakly reversible free-choice Petri net
and $\cR\cN$ its RGSM. The
marking graph of $\cR\cN$ is isomorphic to the one of $\cN$. If $\cN$ and $\cR\cN$ are
Markovian with the same rates then the two marking processes are ``identical'',
meaning that they have the same infinitesimal generator.
\end{mylem}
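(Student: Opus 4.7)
The plan is to build an explicit bijection $\phi$ between the sets of reachable markings that commutes with the firing rule, and then read off the infinitesimal generator part. The crucial observation, made possible by Lemma \ref{lem:isomorphismclusterscomplexes}, is that in a weakly reversible free-choice Petri net the non-empty complexes partition $\cP$ and coincide with the (place parts of the) clusters. Thus every transition $t$ has ${}^\bullet t$ and $t^\bullet$ equal to whole complexes, and if $p,q$ lie in the same complex $C$ then either both of $p,q$ are in ${}^\bullet t$, or both in $t^\bullet$, or neither, for every $t\in\cT$. By induction on firing sequences, the difference $M(p)-M(q)$ is therefore preserved under every firing; hence for all $M\in\cR_\cN(M_0)$, all $C\in\cC\setminus\{\emptyset\}$, and all $p\in C$,
\[
M(p)-\min_{q\in C}M(q) \;=\; M_0(p)-\widetilde M_0(C)\,.
\]

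Next I define $\phi(M)(C)=\min_{p\in C}M(p)$ for $C\in\cC\setminus\{\emptyset\}$ and verify three properties. First, $\phi(M_0)=\widetilde M_0$ by Definition \ref{def:reducedstatemachine}. Second, $\phi$ is injective, with inverse given by $M(p)=\phi(M)([p])+\bigl(M_0(p)-\widetilde M_0([p])\bigr)$ coming from the conservation law above (here $[p]$ is the unique complex containing $p$). Third, $\phi$ respects firing: in $\cN$, $t$ is enabled at $M$ iff $M(p)\geq 1$ for every $p\in{}^\bullet t$, which by the conservation law is equivalent to $\phi(M)({}^\bullet t)\geq 1$, i.e.\ the enabling condition for $t$ in $\cR\cN$ (where ${}^\bullet t$ is a single place); and since ${}^\bullet t,t^\bullet$ are full complexes, firing in $\cN$ subtracts one from every place of ${}^\bullet t$ and adds one to every place of $t^\bullet$, which translates into $\phi(M')({}^\bullet t)=\phi(M)({}^\bullet t)-1$ and $\phi(M')(t^\bullet)=\phi(M)(t^\bullet)+1$, matching the firing rule in $\cR\cN$ exactly. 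Surjectivity of $\phi$ onto $\cR_{\cR\cN}(\widetilde M_0)$ then follows by inductively lifting any firing sequence in $\cR\cN$ to one in $\cN$.

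For the Markovian claim, the isomorphism $\phi$ is label-preserving: the set of transitions realising a given arc $M\to M'$ in the marking graph of $\cN$ is exactly the set realising $\phi(M)\to\phi(M')$ in $\cR\cN$. Consequently, with identical rate constants $\kappa_t$ on both sides, the rate functions agree through $\phi$ and the two infinitesimal generators defined by \eqref{eq:Q} coincide. The only real obstacle is the conservation law of the first paragraph; the remainder is a direct verification. A minor bookkeeping point concerns the empty complex: transitions with ${}^\bullet t=\emptyset$ or $t^\bullet=\emptyset$ in $\cN$ correspond in $\cR\cN$ to source or sink transitions of a generalized state machine in the sense of Definition \ref{def:associatedsm}, which causes no difficulty.
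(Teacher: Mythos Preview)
Your proof is correct and follows essentially the same route as the paper: both hinge on the observation that, since the non-empty complexes partition $\cP$ (Lemma~\ref{lem:isomorphismclusterscomplexes}), the difference $M(p)-M(q)$ is invariant for $p,q$ in the same complex, and both use the map $M\mapsto \bigl(\min_{p\in C}M(p)\bigr)_{C}$ to identify the two marking graphs. Your write-up is in fact more explicit than the paper's, which does not spell out injectivity, the inverse formula, or the lifting of firing sequences for surjectivity; these are genuine additions that make the isomorphism claim airtight.
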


\begin{proof}
Consider two places $p$ and $q$ belonging to the same complex. Since
the complexes are disjoint, Lemma
\ref{lem:isomorphismclusterscomplexes}, each time $p$ gains
(resp. loses) a token, so does $q$. So the difference $M_p-M_q$ is
invariant over all the reachable markings $M$. It has the following
consequence. 

Consider $f: \cR(M_0) \rightarrow \N^{\cC\setminus \emptyset}$ 
defined by $f(M)_C = \min_{p \in \cC} M(p)$.
If $M \xrightarrow{t} M'$ in
$\cN$ then $f(M) \xrightarrow{t} f(M')$ in $\cR\cN$. So $f(\cR(M_0)) =
\cR(\widetilde{M}_0)$ and the marking graph of $\cR\cN$ is the marking graph of
$\cN$ up to a renaming of the nodes.

Since the marking graphs are the same, the infinitesimal generators are also
identical if the two Petri nets have the same rates.
\end{proof}

One could introduce the RGSM associated with a non-WR free-choice Petri
net as in Definition \ref{def:reducedstatemachine}, 
see Figure \ref{fig:WRiso}. 
But in this case Lemma \ref{lem:isomorphpnsm} does
not hold, and the two marking graphs may have nothing in common.  

Now let us compare the structural characteristics of the original
free-choice Petri net $\cN$ and of the reduced generalized
state machine $\cR\cN$.

\begin{figure}[H]
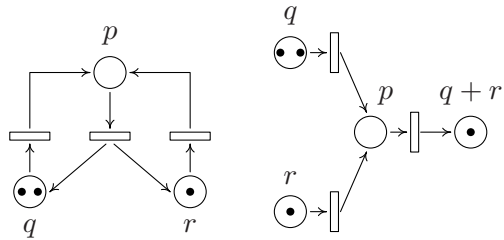

\centering
\begin{mfpic}{-93}{93}{-37.5}{37.5}
\circle{(-52.5, 22.5), 6}
\circle{(-82.5, -22.5), 6}
\circle{(-22.5, -22.5), 6}

\rect{(-45, -3), (-60, 0)}
\rect{(-90, -3), (-75, 0)}
\rect{(-15, -3), (-30, 0)}

\tlabelsep{3pt}

\tlabel[bc](-52.5, 30){$p$}
\tlabel[tc](-82.5, -30){$q$}
\tlabel[tc](-22.5, -30){$r$}

\arrow\polyline{(-52.5, 15), (-52.5, 1.5)}
\arrow\polyline{(-54, -4.5), (-75, -22.5)}
\arrow\polyline{(-51, -4.5), (-30, -22.5)}
\arrow\polyline{(-82.5, -15), (-82.5, -4.5)}
\arrow\polyline{(-22.5, -15), (-22.5, -4.5)}
\arrow\polyline{(-22.5, 1.5), (-22.5, 22.5), (-45, 22.5)}
\arrow\polyline{(-82.5, 1.5), (-82.5, 22.5), (-60, 22.5)}

\point[3pt]{(-85.5, -22.5), (-79.5, -22.5), (-22.5, -22.5)}

\circle{(15, 30), 6}
\circle{(15, -30), 6}
\circle{(45, 0), 6}
\circle{(82.5, 0), 6}

\rect{(30, 37.5), (33, 22.5)}
\rect{(30, -37.5), (33, -22.5)}
\rect{(60, 7.5), (63, -7.5)}

\tlabel[bc](15, 37.5){$q$}
\tlabel[bc](15, -22.5){$r$}
\tlabel[bl](45, 7.5){$p$}
\tlabel[bc](82.5, 7.5){$q + r$}

\point[3pt]{(18.75, 30), (11.25, 30), (15, -30), (82.5, 0)}

\arrow\polyline{(22.5, 30), (28.5, 30)}
\arrow\polyline{(33.75, 30), (43.5, 7.5)}
\arrow\polyline{(22.5, -30), (28.5, -30)}
\arrow\polyline{(33.75, -30), (43.5, -7.5)}

\arrow\polyline{(52.5, 0), (58.5, 0)}
\arrow\polyline{(63.75, 0), (75, 0)}

\end{mfpic}
\caption{A non-WR free-choice net and the associated RGSM.}\label{fig:WRiso}
\end{figure}

\begin{mylem}\label{lem:wrreducedsamedef}
Let $\cN$ be a weakly reversible free-choice Petri net.
The RGSM $\cR\cN$ is weakly
reversible and has the same
deficiency as $\cN$.
\end{mylem}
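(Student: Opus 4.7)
The plan is to show two things: (i) the reaction graph of $\cR\cN$ is isomorphic to the reaction graph of $\cN$, and (ii) the incidence matrices $N_\cN$ and $N_{\cR\cN}$ have the same rank. Combining (i) and (ii) gives the equality of deficiencies, and weak reversibility is an immediate consequence of (i).

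First I would verify point (i) directly from Definition \ref{def:reducedstatemachine}. In $\cR\cN$, each transition $t$ has at most one input place, namely the complex ${}^\bullet_\cN t$ viewed as an element of $\cC\setminus\{\emptyset\}$, and at most one output place $t^\bullet_\cN$; so the complexes of $\cR\cN$ are (in natural bijection with) the complexes of $\cN$, with the empty complex of $\cR\cN$ corresponding to the empty complex of $\cN$ if it exists. An arc $C_1\to C_2$ in the reaction graph of $\cR\cN$ comes from a transition $t$ with $I_{\cR\cN}(t)=C_1$, $O_{\cR\cN}(t)=C_2$, which is the same as $I_\cN(t)=C_1$, $O_\cN(t)=C_2$, i.e.\ the same arc in the reaction graph of $\cN$. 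Hence the reaction graphs coincide, so $|\cC_{\cR\cN}|=|\cC_\cN|$, $\ell_{\cR\cN}=\ell_\cN$, and $\cR\cN$ is weakly reversible.

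For (ii) the key input is Lemma \ref{lem:isomorphismclusterscomplexes}: the non-empty complexes of a WR free-choice Petri net are pairwise disjoint subsets of $\cP$. Define the linear map $\Psi:\R^{\cC\setminus\{\emptyset\}}\to \R^\cP$ by $\Psi(e_C)={\bf 1}_C$ for each non-empty complex $C$. Since the sets $C$ are disjoint and non-empty, the vectors $\{{\bf 1}_C\}_{C\in\cC\setminus\{\emptyset\}}$ have disjoint supports, so $\Psi$ is injective. For each $t\in\cT$, the $t$-column of $N_{\cR\cN}$ is $e_{O_\cN(t)}-e_{I_\cN(t)}$ (with the convention $e_\emptyset=0$), and the $t$-column of $N_\cN$ is ${\bf 1}_{O_\cN(t)}-{\bf 1}_{I_\cN(t)}$. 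Thus $N_\cN=\Psi\circ N_{\cR\cN}$, and injectivity of $\Psi$ yields $\mathrm{rank}(N_\cN)=\mathrm{rank}(N_{\cR\cN})$.

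Putting things together, $\delta_\cN = |\cC_\cN|-\ell_\cN-\mathrm{rank}(N_\cN) = |\cC_{\cR\cN}|-\ell_{\cR\cN}-\mathrm{rank}(N_{\cR\cN}) = \delta_{\cR\cN}$. (As a side remark, since $\cR\cN$ is a GSM, Proposition \ref{prop:gensmdef0} gives $\delta_{\cR\cN}=0$, so in fact $\delta_\cN=0$ as well.) The only non-trivial step is (ii), where one must be careful about the empty complex; but with the convention that $e_\emptyset$ and ${\bf 1}_\emptyset$ are both the zero vector, the identity $N_\cN=\Psi\circ N_{\cR\cN}$ holds without exception, and the disjointness supplied by Lemma \ref{lem:isomorphismclusterscomplexes} is exactly what makes $\Psi$ injective.
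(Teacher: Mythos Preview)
Your proof is correct and follows essentially the same approach as the paper: both identify the reaction graphs of $\cN$ and $\cR\cN$ and then use the disjointness of complexes (Lemma~\ref{lem:isomorphismclusterscomplexes}) to show the incidence matrices have equal rank. The only difference is cosmetic: the paper argues that rows of $N_\cN$ indexed by places in the same complex $C$ are identical and equal to row $C$ of $N_{\cR\cN}$, whereas you package the same observation as the matrix identity $N_\cN=\Psi\, N_{\cR\cN}$ with $\Psi$ injective.
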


\begin{proof}
The weak reversibility of $\cR\cN$ follows directly from the definition of
reduced generalized 
state machines.

The Petri graph of $\cR\cN$ is isomorphic to its
reaction graph, Lemma \ref{lem:isomorphsm}. Now by
construction, $\cN$ and $\cR\cN$ have the same reaction graph.
So the number of complexes and the number of connected
components of the reaction graph do not change.
Call $N$ and $N'$ the incidence matrices of $\cN$ and $\cR\cN$ respectively.
Let $C$ be an arbitrary complex, let $p, p'$ be two places of $C$. For every
transition $t$, we have $N_{p, t} = N_{p', t} = {N'}_{C, t}$, which
implies that $\mbox{rank}(N) = \mbox{rank}(N')$.
So the two Petri nets have the same deficiency.
\end{proof}

\begin{mycor}\label{cor:wrfcdef0}
Weakly reversible free-choice Petri nets have deficiency 0.
\end{mycor}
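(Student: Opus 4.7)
The plan is to deduce the corollary immediately from two results established just above: Lemma \ref{lem:wrreducedsamedef} (the RGSM of a weakly reversible free-choice net has the same deficiency as the original net) and Proposition \ref{prop:gensmdef0} (every generalized state machine has deficiency 0).

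First, starting from a weakly reversible free-choice Petri net $\cN$, I would form its reduced generalized state machine $\cR\cN$ as in Definition \ref{def:reducedstatemachine}. Since $\cN$ is weakly reversible and free-choice, Lemma \ref{lem:isomorphismclusterscomplexes} guarantees that the non-empty complexes are disjoint, so the construction of $\cR\cN$ is well-defined and indeed produces a GSM.

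Next, I would invoke Lemma \ref{lem:wrreducedsamedef} to conclude that $\cR\cN$ has the same deficiency as $\cN$. Then I would apply Proposition \ref{prop:gensmdef0} to $\cR\cN$, yielding that $\cR\cN$ has deficiency 0. Combining the two equalities gives $\delta(\cN) = \delta(\cR\cN) = 0$, which is exactly the claim.

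Since all the work has already been done in the preceding lemmas, there is no real obstacle here: the corollary is a one-line consequence. The only thing to be careful about is to make sure the hypotheses of the two cited results are actually met, which they are by assumption (weak reversibility and free-choiceness for Lemma \ref{lem:wrreducedsamedef}, and the fact that $\cR\cN$ is a GSM by construction for Proposition \ref{prop:gensmdef0}).
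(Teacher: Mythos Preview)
Your proposal is correct and matches the paper's intended argument: the corollary is placed immediately after Lemma~\ref{lem:wrreducedsamedef} precisely because it follows at once from that lemma together with Proposition~\ref{prop:gensmdef0}. The extra remark invoking Lemma~\ref{lem:isomorphismclusterscomplexes} to ensure that $\cR\cN$ is a well-defined GSM is a reasonable sanity check, though the paper treats this as already built into Definition~\ref{def:reducedstatemachine}.
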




Now all the results for weakly reversible GSM can be applied to
weakly reversible free-choice Petri nets. We get the following.

\begin{mythm}\label{thm:fcwrNLTEdefzero} 
Repeat the statement of Theorem \ref{th:sm} with ``free-choice Petri
net'' replacing ``GSM''. 

\end{mythm}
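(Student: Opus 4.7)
The proof plan is essentially an assembly of results already proved in this section, with almost no remaining computation required. First, I would observe that by Corollary \ref{cor:wrfcdef0}, any weakly reversible free-choice Petri net $\cN$ has deficiency~$0$. Since $\cN$ is assumed weakly reversible by hypothesis, we are exactly in the scope of Theorem \ref{thm:defzero}, which yields a strictly positive solution $(u_p)_{p\in\cP}$ to the NLTE and the product form invariant measure
\[
\pi(x) = \Phi(x)^{-1}\prod_{p\in\cP} u_p^{x_p}, \qquad x \in \cR(M_0).
\]
This already covers the first three conclusions of the target theorem.

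For the finiteness/ergodicity part, I would argue via the reduced generalized state machine $\cR\cN$. By Lemma \ref{lem:isomorphpnsm}, the marking graph of $\cN$ is isomorphic to that of $\cR\cN$, and the two marking processes share the same infinitesimal generator (once the rates are copied over). By Lemma \ref{lem:wrreducedsamedef}, $\cR\cN$ is a weakly reversible GSM with the same set of complexes as $\cN$, so in particular $\emptyset \notin \cC_{\cR\cN}$ whenever $\emptyset \notin \cC$ in $\cN$. Applying Theorem \ref{th:sm} to $\cR\cN$ gives that $\cR(\widetilde M_0)$ is finite, hence so is $\cR(M_0)$, and the marking process of $\cN$ is ergodic on this finite, strongly connected state space (irreducibility comes from $\pi$ being strictly positive).

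Finally, the normalization $B = \bigl(\sum_x \Phi(x)^{-1}\prod_p u_p^{x_p}\bigr)^{-1}$ turns $\pi$ into the unique stationary distribution $\widetilde\pi$, as in the last clause of Theorem \ref{th:sm}. There is no genuine obstacle here: the work has been done in Corollary \ref{cor:wrfcdef0} (reducing to deficiency zero) and in Lemma \ref{lem:isomorphpnsm} (transferring the state space between $\cN$ and $\cR\cN$). The one point that would be worth stating explicitly in the written-out proof is that $\emptyset \in \cC_{\cN}$ if and only if $\emptyset \in \cC_{\cR\cN}$, which is immediate from the construction of $\cR\cN$ (its transitions inherit their input and output bags from $\cN$ via the cluster-to-place collapse).
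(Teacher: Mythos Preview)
Your proposal is correct and matches the paper's approach. The paper itself gives no explicit proof beyond the sentence ``Now all the results for weakly reversible GSM can be applied to weakly reversible free-choice Petri nets,'' and your write-up is precisely the natural unpacking of that sentence: deficiency~$0$ via Corollary~\ref{cor:wrfcdef0}, product form via Theorem~\ref{thm:defzero}, and finiteness of the state space by transferring to $\cR\cN$ through Lemmas~\ref{lem:isomorphpnsm} and~\ref{lem:wrreducedsamedef} and then invoking Theorem~\ref{th:sm}.
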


{\bf Acknowledgement.} We are grateful to Serge Haddad for 
fruitful discussions and for suggesting the example of Section
\ref{sse-haddad}. 


\closegraphsfile

\end{document}